\newcommand*{\fullref}[1]{\hyperref[{#1}]{\autoref*{#1} \nameref*{#1}}}
\newcommand{\C}{\mathbb{C}}
\newcommand{\ketbra}[1]{|{#1}\>\mkern-5mu\<{#1}|}
\renewcommand{\>}{\rangle}
\newcommand{\<}{\langle}
\renewcommand{\mod}[1]{\;\mathrm{mod}\,{#1}}
\newcommand{\vcirc}{{\text{\scalebox{1.7}{$\bullet$}}}}
\newcommand{\vsq}{{\mathbin{\blacksquare}}}
\newtheorem{theorem}{Theorem}
\newtheorem{proposition}[theorem]{Proposition}
\theoremstyle{definition}
\newtheorem{definition}{Definition}
\newcommand{\CQT}{Centre for Quantum Technologies, National University of Singapore, 3 Science Drive 2, Singapore 117543.\looseness=-1}
\newcommand{\NTU}{Nanyang Quantum Hub, School of Physical and Mathematical Sciences, Nanyang Technological University, Singapore 639673.\looseness=-1}
\newcommand{\IHPC}{A*STAR Quantum Innovation Centre (Q.InC), Institute of High Performance Computing (IHPC), Agency for Science, Technology and Research (A*STAR), 1 Fusionopolis Way, \#16-16 Connexis, Singapore, 138632, Republic of Singapore.\looseness=-1}
\newcommand{\CQuERE}{Centre for Quantum Engineering, Research and Education, TCG CREST, Sector V, Salt Lake, Kolkata 700091, India.\looseness=-1}
\newcommand{\sutd}{Science, Mathematics and Technology Cluster, Singapore University of Technology and Design, 8 Somapah Road, Singapore 487372, Singapore}
\begin{document}

\title{Simple Construction of Qudit Floquet Codes on a Family of Lattices}

\author{Andrew Tanggara}
\email{andrew.tanggara@gmail.com}
\affiliation{\CQT}
\affiliation{\NTU}

\author{Mile Gu}
\email{mgu@quantumcomplexity.org}
\affiliation{\NTU}
\affiliation{\CQT}

\author{Kishor Bharti}
\email{kishor.bharti1@gmail.com}
\affiliation{\IHPC}
\affiliation{\CQuERE}
\affiliation{\sutd}

\date{\today}

\begin{abstract}
    Dynamical quantum error-correcting codes (QECC) offer wider possibilities in how one can protect logical quantum information from noise and perform fault-tolerant quantum computation compared to static QECCs.
    A family of dynamical QECCs called the ``Floquet codes'' consists of a periodic sequence of two-body measurements that enables error-correction on many-body systems, relaxing hardware implementation requirements and improving error-correction reliability.
    Existing results on Floquet codes has been focused on qubits, two-level quantum systems, with very little attention given on higher dimensional quantum systems, or qudits.
    We bridge this gap by proposing a simple, yet general construction of qudit Floquet codes based on a simple set of conditions on the sequence two-body measurements defining the code.
    Moreover, this construction applies to a large family of configurations of qudits on the vertices of a three-colorable lattice which connectivity represented by the edges.
    We show that this construction includes the existing constructions of both qubit and qudit Floquet codes as special cases.
    In addition, any qudit Floquet code obtained by our construction achieves a rate of encoded logical qudits over physical qudits approaching $\frac{1}{2}$ as the number of physical qudits in total and on the faces of the lattice grows larger, as opposed to vanishing rate in existing qudit Floquet code constructions.
\end{abstract}

\maketitle

Advantages offered by quantum information processing over classical processing are largely impeded by inevitable environmental noise, putting forth the need of an quantum error-correction procedure.
Recent developments in dynamical quantum error-correcting codes (QECC)~\cite{hastings2021dynamically,davydova2023floquet,fahimniya2023hyperbolic,gidney2021fault,haah2022boundaries,vuillot2021planar,paetznick2023performance,gidney2022benchmarking,hilaire2024enhanced,higgott2023constructions,aasen2023measurement,zhang2023x,dua2024engineering,alam2024dynamicallogicalqubitsbaconshor,bombin2023unifying,townsend2023floquetifying,ellison2023floquet,sullivan2023floquet,fu2024error,mcewen2023relaxing,kesselring2024anyon,davydova2023quantum,tanggara2024strategic} show that through an imposed dynamics on the system where information is encoded, one can obtain a fault-tolerant quantum memory, as well as perform fault-tolerant and error-correcting operations.
The most extensively studied family of dynamical QECCs called the ``Floquet codes''~\cite{hastings2021dynamically,davydova2023floquet,fahimniya2023hyperbolic,gidney2021fault,haah2022boundaries,vuillot2021planar,paetznick2023performance,gidney2022benchmarking,hilaire2024enhanced,higgott2023constructions,aasen2023measurement,zhang2023x,dua2024engineering,alam2024dynamicallogicalqubitsbaconshor,bombin2023unifying,townsend2023floquetifying,ellison2023floquet,sullivan2023floquet} offer advantages through performing a periodic sequence of two-body measurements that enables error-correction on a many-body system, relaxing hardware implementation requirements while also showing higher code threshold compared to static codes on certain platforms~\cite{gidney2021fault,hilaire2024enhanced}.
While there has been many results and proposals on static QECCs for higher dimensional quantum systems, or qudits~\cite{gottesman1997stabilizer,gottesman2016surviving,campbell2012magic,campbell2014enhanced,bullock2007qudit,aharonov1997fault,andrist2015error,nadkarni2021quantum,looi2008quantum,grassl2018quantum,miller2018propagation,schmidt2024error,ma2023non,watson2015qudit,anwar2014fast,duclos2013kitaev,hutter2015improved,watson2015fast,sabo2024trellis,brock2024quantumerrorcorrectionqudits}, our understanding of qudit dynamical QECCs is limited.
The use of qudits in quantum computation has been shown to exhibit advantages compared to qubit systems, such as: an improvement in compiling of multi-control gates exponential in the number of control registers~\cite{gokhale2019asymptotic,kiktenko2020scalable} (applicable to compilation of sub-routines for Grover's search~\cite{grover1996fastquantummechanicalalgorithm} and Shor's factoring~\cite{shor1999polynomial} algorithms), as well as a smaller number of qudits in a qubit-to-qudit circuit compilations~\cite{luo2014universal,luo2014geometry,wang2020qudits,kiktenko2023realizationquantumalgorithmsqudits,Nikolaeva_2024}, and higher yield and fidelity in magic-state distillation that tends to optimal asymptotically with qudit dimension~\cite{campbell2012magic,campbell2014enhanced}.
Moreover, the use of qudits has also been shown to increase security and efficiency in quantum cryptography~\cite{bruss2002optimal,cerf2002security,durt2003security,durt2004security,bradler2016finite,bouchard2017high}, increase noise-resistance and security of quantum communications~\cite{cozzolino2019high}, and enable simulation of high-dimensional quantum systems~\cite{sawaya2020resource,tacchino2021proposal,gonzalez2022hardware,meth2023simulating}
These advantages of qudits lead to qudit-based experimental realizations on various platforms, such as superconducting circuits~\cite{neeley2009emulation,blok2021quantum,morvan2021qutrit,roy2023two,goss2022high,fedorov2012implementation} photonics~\cite{lu2020quantum,wang2018multidimensional,chi2022programmable,reimer2019high}, trapped-ion~\cite{low2020practical,hrmo2023native,leupold2018sustained,ringbauer2022universal}, and circuit QED~\cite{brock2024quantumerrorcorrectionqudits}.
Thus to harness desirable features of both qudit quantum information and dynamical QECC fault-tolerantly, a further understanding on ways one can implement qudit processing tasks fault-tolerantly using dynamical QECCs. 

In this work we propose a family of dynamical qudit QECC called the ``qudit Floquet codes'', which protects logical quantum information by periodically performing two-qudit measurements on qudits placed either on a torus or a large family of hyperbolic surfaces.
Our qudit Floquet code construction is based on commutation conditions on the code's measurement sequence.
We show that a qudit Floquet code satisfying these conditions dynamically encodes $k$ logical qudits where $k$ grows linearly with the genus (i.e. the number of holes) of the lattice, although when viewed as a static subsystem code does not encode any logical information.
We also discuss a decoding strategy for our qudit Floquet code construction by using a ``space-time'' syndrome lattice to determine where and when errors occurred.
Lastly, we also provide a novel explicit construction of qudit Floquet codes for arbitrary prime dimension larger than two, based on the set of conditions that we proposed.

Compared to previous works on Floquet codes, our qudit Floquet code construction generalizes the qubit Floquet code constructions on a torus~\cite{hastings2021dynamically,davydova2023floquet,gidney2021fault} and on hyperbolic lattices~\cite{fahimniya2023hyperbolic,higgott2023constructions} to higher dimensional systems and provide more flexibility on the choice of measurements.
Moreover, our construction also generalizes the existing qudit Floquet code constructions on a torus~\cite{ellison2023floquet,sullivan2023floquet} to hyperbolic lattices, as well as allows more flexibility in the choice of measurements.
We also note that all of the aforementioned constructions of Floquet codes with explicit measurement settings (both qubit and qudit) are special cases of our construction.
This generalization to hyperbolic lattices results in a family of qudit Floquet codes with a rate of encoded logical qudits over the number of physical qudits that approaches $\frac{1}{2}$ asymptotically with the number of physical qudits in total and on the faces of the lattice, which is a significant improvement over the vanishing rate of existing qudit Floquet code constructions.

This paper is structured as follows:
In Section~\ref{sec:qudit_pauli_stabilizer} we review qudit Pauli group, qudit stabilizer group, and qudit stabilizer code.
Then in Section~\ref{sec:qubit_floquet_prelim} we review the ``honeycomb'' qubit Floquet code construction by Hastings and Haah~\cite{hastings2021dynamically}.
Section~\ref{sec:qudit_floquet_code} is the main part of the paper where we define our construction of a family of qudit Floquet codes.
We describe the sequence of measurements that defines the qudit Floquet codes and how the codespace dynamically changes as these measurements are performed in Section~\ref{sec:ISG}.
Particularly, we show how a qudit stabilizer group is updated after a Pauli measurement is performed in Theorem~\ref{thm:qudit_stabilizer_update}.
Then we use these update rules to show in Theorem~\ref{thm:general_qudit_floquet_code_conditions} that whenever a set of commutation conditions between the measurements are satisfied, the codespace evolves periodically such that at any given time, the number of encoded logical qudits scales linearly with the number of physical qudits.
For the rest of Section~\ref{sec:qudit_floquet_code}, we show that the encoding rate of the code approaches $\frac{1}{2}$ asymptotically with the number of physical qudits and the size of faces on the lattice and discuss how no logical information can be encoded when the qudit Floquet code is viewed as a subsystem code.
In Section~\ref{sec:logical_opers} we discuss in detail about how the logical operators of the qudit Floquet code can be constructed.
Then, in Section~\ref{sec:qudit_code_error_syndrome_decoding} we discuss a decoding strategy which involves a ``space-time syndrome lattice'' to identify where and when did error occurred.
Lastly in Section~\ref{sec:cirsq_qudit_floquet_code} we give a novel explicit instance of the qudit Floquet code for any prime dimension greater than two based on the conditions proposed in Theorem~\ref{thm:general_qudit_floquet_code_conditions}.

\section{Preliminaries}

\subsection{Qudit Pauli group and stabilizer group}\label{sec:qudit_pauli_stabilizer}

In this section, we review and set up notations for qudit Pauli group and qudit Stabilizer group in prime dimension $D$ (see~\cite{vourdas2004quantum,gheorghiu2014standard}).
The primitive components of the qudit Pauli group are the generalized Pauli $X$ and $Z$ operators defined by
\begin{equation}
\begin{gathered}
    X = \sum_{j=0}^{D-1} |j\oplus1\>\<j| \quad\textup{and}\quad
    Z = \sum_{j=0}^{D-1} \omega^j \ketbra{j}
\end{gathered}
\end{equation}
for $D$-th root of unity $\omega = e^{i\frac{2\pi}{D}}$, and thus $ZX = \omega XZ$ and $\{|0\>,\dots,|D-1\>\}$ is an orthonormal basis of $D$ dimensional complex vector space $\C^D$.
Note that $X$ and $Z$ are both unitary operators on $\C^D$, although they are not Hermitian except for $D=2$.
A (phaseless) \textit{qudit Pauli operator} is defined as $P_{a,b} = X^a Z^b$ where $a,b\in [D]:=\{0,\dots,D-1\}$.
These operators satisfy
\begin{equation}\label{eqn:qudit_pauli_commutation}
\begin{aligned}
    P_{a,b}P_{a',b'} = \omega^{c(P_{a,b},P_{a',b'})} P_{a',b'}P_{a,b} \;,
\end{aligned}
\end{equation}
where $c(\cdot,\cdot)$ is the \textit{commutation function} between two Paulis defined by $c(P_{a,b},P_{a',b'}) = -ab' + ba'$.
Here, the addition and products are all in mod $D$ (which henceforth is to be assumed for all integer arithmetics).
As $X$ and $Z$ does not commute, a power of $\omega$ factor also gives a distinct qudit Pauli operator $\omega^l P_{a,b} = \omega^l X^a Z^b = \omega^{l-ab} Z^bX^a$ for $l\in[D]$.
These operators makes up the \textit{qudit Pauli group}
\begin{equation}
\begin{gathered}
    \mathcal{P}_D = \{\omega^l P_{a,b} : a,b,l\in[D]\} \;.
\end{gathered}
\end{equation}
An $n$-qudit Pauli group is thus $\mathcal{P}_D^{\otimes n}$, for integer $n\geq1$.
For some $n$-qudit Pauli $P\in\mathcal{P}_D^{\otimes n}$, the projector onto its $\omega^a$-eigenspace is
\begin{equation}\label{eqn:qudit_pauli_eigenspace_projector}
    \Pi_a^{(P)} = \frac{1}{D} \sum_{j=0}^{D-1} (\omega^{-a} P)^j \;,
\end{equation}
as one can show that $(\Pi_o^{(P)})^2 = \Pi_o^{(P)}$ and that the sum of these projectors multiplied by their corresponding eigenvalues $\omega^0,\dots,\omega^{D-1}$ is equal to $P$:
\begin{equation}
\begin{aligned}
    \sum_{a=0}^{D-1} \omega^a \Pi_a^{(P)} &= \frac{1}{D} \sum_{j=0}^{D-1} P^j \Big(\sum_{a=0}^{D-1} \omega^{a(1-j)} \Big) \\
    &= \frac{1}{D} P \Big(\sum_{a=0}^{D-1} \omega^0 \Big) = P \;,
\end{aligned}
\end{equation}
since $\sum_{a=0}^{D-1} \omega^{a(1-j)} = \sum_{a=0}^{D-1} \omega^a = 0$ for all $j\neq 1$.

Note that for $D>2$, qudit Pauli $X$ and $Z$ are not Hermitian, hence all non-identity qudit Pauli $P_{a,b}$ do not correspond to a measurement observable.
However operators $\frac{P_{a,b}+P_{a,b}^\dag}{2}$ is Hermitian and has the same set of eigenvectors as $P_{a,b}$ with a one-to-one correspondence between their eigenvalues.
Namely, $P_{a,b}$ has eigenvectors $\{|\psi_j^{a,b}\>\}_j$ with eigenvalues $\{\omega^j\}_j$ if and only if $\frac{P_{a,b}+P_{a,b}^\dag}{2}$ has eigenvectors $\{|\psi_j^{a,b}\>\}_j$ with eigenvalues $\{\cos(j2\pi/D)\}_j$.
This can be shown by the relation
\begin{equation}
\begin{aligned}
    \frac{P_{a,b}+P_{a,b}^\dag}{2}|\psi_j^{a,b}\> &= \frac{\omega^j+\omega^{-j}}{2}|\psi_j^{a,b}\> \\
    &= \cos(j2\pi/D) |\psi_j^{a,b}\> \;,
\end{aligned}
\end{equation}
since $\frac{e^{i\theta}+e^{-i\theta}}{2} = \cos\theta$.
Thus throughout this paper, when we say that a qudit Pauli $P_{a,b}$ is measured with outcome $j$, the observable being measured effectively is $\frac{P_{a,b}+P_{a,b}^\dag}{2}$ where the outcome $j$ corresponds to its eigenvalue $\cos(j2\pi/D)$.

An \textit{$n$-qudit Stabilizer group} $\mathcal{S}$ is an Abelian subgroup of the $n$-qudit Pauli group $\mathcal{P}_D^{\otimes n}$.
A $n$-qudit Stabilizer group $\mathcal{S}$ generated by $m$ mutually commuting elements of $n$-qudit Paulis $\{g_1,\dots,g_m\}\subseteq \mathcal{P}_D^{\otimes n}$ is denoted as
\begin{equation}
\begin{aligned}
    \mathcal{S} = \<g_1,\dots,g_m\> = \Big\{ \prod_{j=1}^m g^{a_j} : a_j\in[D] \Big\} \;.
\end{aligned}
\end{equation}
If generators $g_1,\dots,g_m$ are independent elements of the $n$-qudit Pauli group $\mathcal{P}_D^{\otimes n}$, then there are $|\mathcal{S}|=D^m$ elements in the Stabilizer group $\mathcal{S}$ that it generates.
An $n$-qudit stabilizer group $\mathcal{S}$ generated by $g_1,\dots,g_m$ defines an \textit{$n$-qudit stabilizer code} $\mathcal{C}(\mathcal{S})$, which is the set of all $n$-qudit states $|\psi\>$ that is the common eigenstate of all elements of $\mathcal{S}$ with eigenvalue $1$.
For such a state $|\psi\>$ we say that it is \textit{stabilized} by $\mathcal{S}$ (and also stabilized by Pauli $s\in\mathcal{S}$).
Notationally it is expressed as
\begin{equation}\label{eqn:qudit_stabilizer_code}
    \mathcal{C}(\mathcal{S}) = \{|\psi\> : \forall s\in\mathcal{S}\,,\, s|\psi\>=|\psi\> \} \;.
\end{equation}
When $\mathcal{S}$ has a set of independent generators $g_1,\dots,g_m$, it has been shown in~\cite{gheorghiu2014standard} that $n$-qudit stabilizer code $\mathcal{C}(\mathcal{S})$ encodes $k=n-m$ logical qudits.
Similar to qubit stabilizer codes, qudit stabilizer codes has a code distance which can be obtained by taking the minimum weight (i.e. number of non-identity Paulis) over all logical operators of the code.
The logical operators are operators in the normalizer $\mathcal{N}(\mathcal{S})$ of stabilizer group $\mathcal{S}$ that themselves are not is $\mathcal{S}$, denoted by $\mathcal{N}(\mathcal{S})\backslash\mathcal{S}$.
So distance $d$ of code $\mathcal{C}(\mathcal{S})$ is
\begin{equation}
    d = \min\{\mathrm{weight}(L) : L\in \mathcal{N}(\mathcal{S})\backslash\mathcal{S}\} \;,
\end{equation}
where for an $n$-qudit Pauli $P\in\mathcal{P}_D^{\otimes n}$, we denote weight of $P = P_1\otimes\dots\otimes P_n$ by $\mathrm{weight}(P) = |\{i\in[n] : P_i\neq I\}|$ which is the number of non-identity qudit Paulis.
Finally, we can then obtain the code parameter $\llbracket n,k,d \rrbracket$ of $\mathcal{C}(\mathcal{S})$ which signifies the number of physical qudits, encoded logical qudits, and distance of the code.

\subsection{Floquet quantum error-correcting codes}\label{sec:qubit_floquet_prelim}

In this section we review Floquet quantum error-correcting codes, or simply Floquet codes.
Floquet codes is a family quantum error-correcting codes where the manner in which logical quantum information is encoded varies over time in a periodical manner.
The first Floquet code is proposed by Hastings and Haah in~\cite{hastings2021dynamically} which we refer to as Hastings-Haah Floquet code.
In the Hastings-Haah Floquet code, physical qubits in which logical information is encoded are placed on vertices of a hexagonal, or ``honeycomb'' lattice with periodic boundary conditions, tesselating a torus (see top Fig.~\ref{fig:surfaces}).
Each hexagonal face on the lattice is assigned either a green, red, or blue color such that no neighboring hexagons have the same color.
Each edge is also assigned the same color as the two hexagons that it is sharing a single qubit with.
For an illustration, see top Fig.~\ref{fig:qudit_honeycomb_lattice}.

As opposed to the conventional stabilizer topological quantum error-correcting codes where the stabilizer group $\mathcal{S}$ is defined by a fixed set of commuting Pauli operators on the qubits, a Floquet code is defined by a set of non-commuting weight-2 Pauli measurements on the edges of the lattice, called the \textit{check} measurements (or simply, ``checks'').
The check measurements are performed sequentially over multiple \textit{measurement rounds} (or simply, ``rounds'') with a period of three where in a given round $r$ a set of commuting checks $C_r$ are performed, although the checks $C_r$ of round $r$ does not commute with checks $C_{r-1}$ of the previous round and checks $C_{r+1}$ of the next round.
We can take the round $r=0\mod{3}$ checks to correspond to the green edges, round $r=0\mod{3}$ checks to correspond to the red edges, and round $r=0\mod{3}$ checks to correspond to the blue edges.
Since the checks are performed sequentially with period 3, therefore $C_r = C_{r+3}$.
This periodic sequence of checks is called a \textit{measurement schedule}.
We will define these concepts of Floquet codes more precisely in the next section when we define the qudit Floquet code.

Starting from an initial round $r=0$ and an initial stabilizer group $\mathcal{S}_0$ consisting only of the identity $I$, Hastings and Haah in~\cite{hastings2021dynamically} uses a set of rules for updating stabilizer group $\mathcal{S}_r$ to $\mathcal{S}_{r+1}$ after performing the checks $C_r$ on a codestate $|\psi\>$ of $\mathcal{S}_r$ (i.e. state $|\psi\>$ such that $s|\psi\>=|\psi\>$ for all $s\in\mathcal{S}_r$).
They show that for $r\geq3$ the stabilizer groups over the rounds evolves with period of three, namely $\mathcal{S}_r=\mathcal{S}_{r+3}$, where each stabilizer group $\mathcal{S}_r$ is the called instantaneous stabilizer group (ISG) of round $r$.
The ISG at any round $r\geq3$ contains weight-six stabilizers called the \textit{plaquette stabilizers} in its generator, each with support on a hexagon. 
In round $r=0\mod{3}$, the ISG generator additionally contains the weight-two Paulis corresponding to the green checks, whereas in round $r=1\mod{3}$ and $r=2\mod{3}$ the ISG contains the red and blue checks, respectively.
As such, the ISG evolves with period-3 from ISG $\mathcal{S}_\mathrm{green}$ with green check generator, to ISG $\mathcal{S}_\mathrm{red}$ with red check generator, to ISG $\mathcal{S}_\mathrm{blue}$ with blue check generator, then back to $\mathcal{S}_\mathrm{green}$ (see Fig.~\ref{fig:hastings_haah_floquet_ISG}).
Moreover they've shown that the stabilizer code $\mathcal{C}(\mathcal{S}_r)$ encodes two logical qubits for all $r\geq3$, although when seen as a qubit stabilizer subsystem code~\cite{poulin2005stabilizer} no logical information can be encoded\footnote{A qubit stabilizer subsystem code in this context is defined by a non-Abelian group $\mathcal{G}$ generated by all checks $C_0,C_1,C_2$ where one can obtain a stabilizer group $\mathcal{S}$ by taking the center of group $\mathcal{G}$. Particularly, Hastings and Haah have shown that the stabilizer code $\mathcal{C}(\mathcal{S})$ obtained in this manner encodes no logical information.}.
Many variants of Floquet code has since been proposed, including a ``CSS'' Floquet code~\cite{davydova2023floquet} where the checks consists only of qubit Pauli $X$ and $Z$, a generalization of Hastings-Haah Floquet code to hyperbolic lattices~\cite{fahimniya2023hyperbolic,higgott2023constructions}, as well as a ``planar'' Floquet code variant in which the lattice where qubits are placed has boundaries~\cite{vuillot2021planar,haah2022boundaries}.

\begin{figure}
    \centering
    \includegraphics[width=0.5\columnwidth]{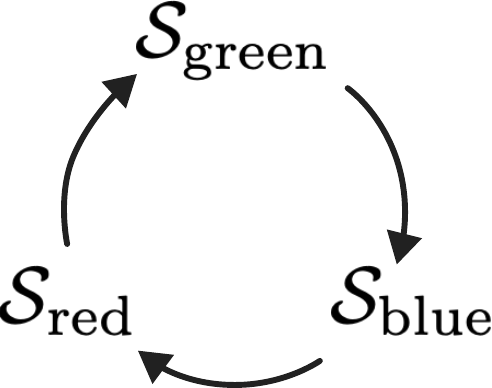}
    \caption{Periodic evolution of the instantaneous stabilizer group (ISG) in the Hastings-Haah Floquet code.}
    \label{fig:hastings_haah_floquet_ISG}
\end{figure}

\section{General Qudit Floquet Code}\label{sec:qudit_floquet_code}

In this section we define the construction of the family of qudit Floquet codes on arbitrary qudit of prime dimension $D$ and to three-colorable hyperbolic lattices.
Similar to the qubit honeycomb Floquet code, we start with a closed surface with genus $g\geq1$.
Surfaces with genus $g=1$ corresponds to a torus, whereas those with $g\geq2$ corresponds to hyperbolic surfaces with more than one ``holes'' (see Fig.~\ref{fig:surfaces}).
A lattice can be constructed on a surface by a $\{p,q\}$-\textit{tesselation} where $\{p,q\}$ is the Schl\"{a}fli symbol~\cite{blatov2010vertex} signifying a lattice consisting of faces with $p$ edges and vertices of degree $q$ (i.e. $q$ edges connected to each vertex).
We require these lattices to be three-colorable and for $p=3$, i.e. degree-3 vertices.
For a torus, we consider a $\{6,3\}$-tesselation which corresponds to hexagonal faces used in the honeycomb Floquet codes~\cite{hastings2021dynamically,gidney2021fault,davydova2023floquet} (see top Fig.~\ref{fig:qudit_honeycomb_lattice}).
For surfaces with genus $g\geq2$ we consider $\{p,3\}$-tesselations with even $p\geq8$.
For a tesselation on genus $g\geq2$ surfaces with $p=8$, we obtain lattices with octagonal faces that coincides with those considered in hyperbolic qubit Floquet code constructions in~\cite{fahimniya2023hyperbolic,higgott2023constructions}.
We note that for hyperbolic surfaces with genus $g\geq2$ a $\{p,q\}$-tesselation exists as long as $p$ and $q$ satisfy $(p-2)(q-2)>4$~\cite{albuquerque2009topological}.
Hence when fixing the degree $q=3$ we can obtain a lattice with $p$-gon faces for any $p>6$.

We then assign green, red, and blue colors to each faces of the lattice, as well as the edges such that the color of each edge is the same as the two faces it connects. 
The coloring scheme is illustrated in Fig.~\ref{fig:qudit_honeycomb_lattice} for a hexagonal and an octagonal lattice.
For the hexagonal lattice, this is the same three-coloring scheme as Floquet honeycomb codes in~\cite{gidney2021fault,davydova2023floquet}.
The coloring scheme for octagonal hyperbolic lattices are the same as those in qubit hyperbolic Floquet codes~\cite{fahimniya2023hyperbolic,higgott2023constructions}.

\begin{figure}
    \centering
    \includegraphics[width=0.9\columnwidth]{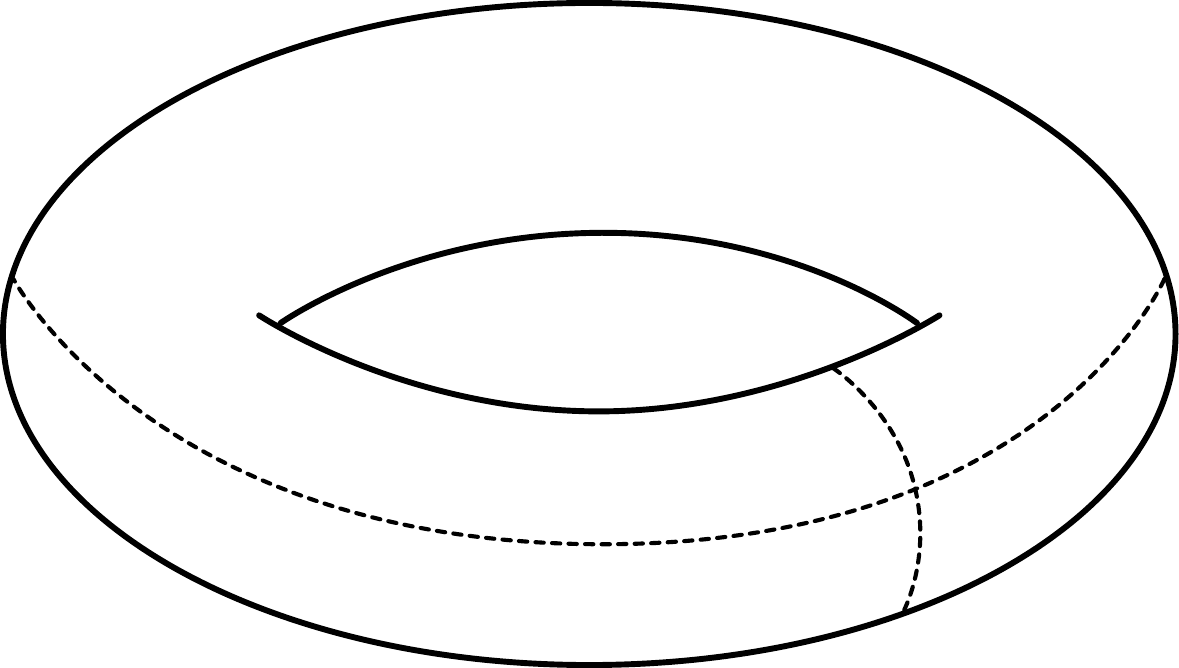}

    \vspace{0.8em}
    
    \includegraphics[width=1\columnwidth]{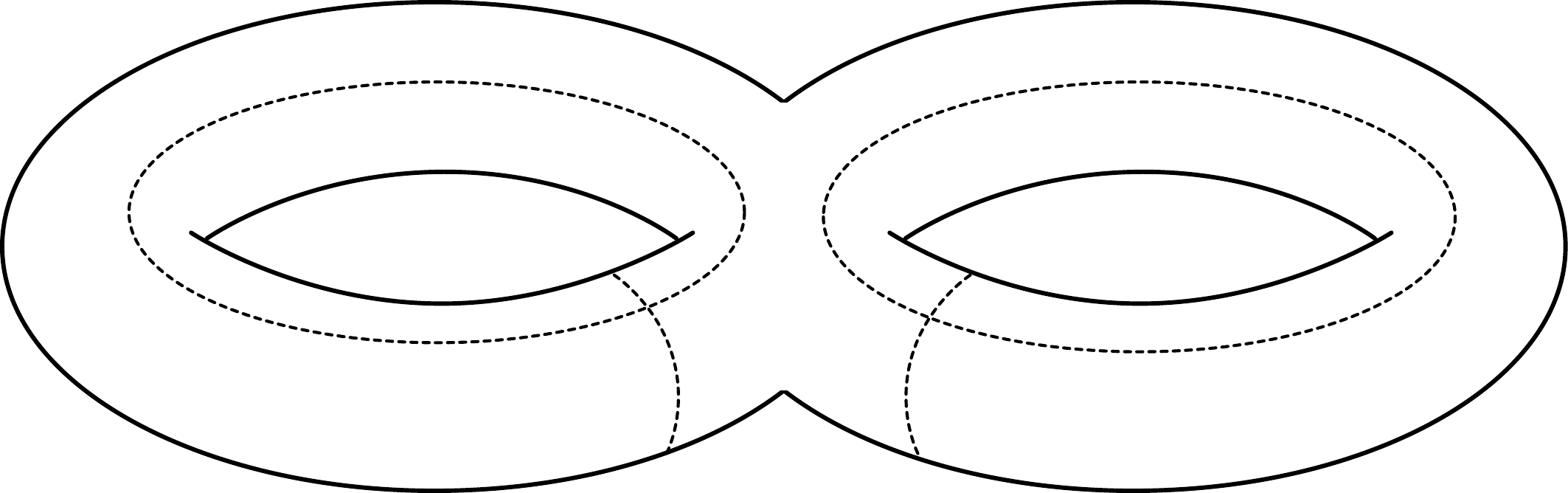}
    \caption{Top: A torus, i.e. surface with genus $g=1$. Bottom: Hyperbolic surface with genus $g=2$, i.e. two ``holes''.}
    \label{fig:surfaces}
\end{figure}

Similar to the qubit Floquet codes, how the qudit Floquet code encodes logical information is determined by how the check measurements are performed on the $n$ qudits over multiple rounds, which in turn dynamically changes the codespace.
Check measurements for the qudit Floquet code consists of weight-two qudit Pauli measurements corresponding to each edge.
Here we adopt a general approach in determining the type of check measurements by describing the lattice using a graph $G=(V,E)$ where its the set of vertices $V$ are the labels for the qudits on the lattice and the set of edges $E$ are the edges connecting the qudits.
We denote the edge connecting qudits $v$ and $u$ by $(v,u)\in E$.
The green, red, and blue three-coloring of the plaquettes and edges defines both the measurement schedule and the checks in each measurement schedule.

\begin{definition}[Check measurements on three-colorable lattices]\label{def:checks}
    Consider a three-colorable $\{p,3\}$-lattice on a genus-$g$ surface described by a graph $G=(V,E)$ where $p=6$ for $g=1$ and even $p\geq8$ for $g\geq2$.
    Assign green, red, and blue colors to the plaquettes and let $E_g,E_r,E_b \subset E$ be the set of green, red, and blue edges, respectively, where each color-$l$ edge connects two color-$l$ plaquettes.
    \textit{Green, red, and blue check measurements} are defined by
    \begin{equation}\label{eqn:general_checks}
    \begin{aligned}
        C_g = \{M_{(v,u)} : (v,u)\in E_g\} \\
        C_r = \{M_{(v,u)} : (v,u)\in E_r\} \\
        C_b = \{M_{(v,u)} : (v,u)\in E_b\}
    \end{aligned}
    \end{equation}
    respectively.
    For color $l\in\{g,r,b\}$, a check measurement in $C_l$ is of the form $M_{(v,u)} = [P_{v_l}]_v [P_{u_l}]_u \in C_l$ where $v_l,u_l\in[D]\times[D]$ (following the notation in Section~\ref{sec:qudit_pauli_stabilizer}) and for a qudit Pauli operator $P\in\mathcal{P}_D$, we denote $[P]_v$ as an $n$-qudit Pauli operator with identity everywhere except for a Pauli $P$ at the qudit labeled as $v$.
\end{definition}

Given these check measurements, we can now define a qudit Floquet code as follows.

\begin{definition}[Qudit Floquet code]\label{def:qudit_floquet_code}
    An $n$-\textit{qudit Floquet code} is defined on a three-colored lattice on a genus $g$ surface and a period-3 measurement schedule on checks as described in Definition~\ref{def:checks}:
    \begin{enumerate}
        \item In round $r=0\mod{3}$, measure all the green checks, 
        \item in round $r=1\mod{3}$ measure all the red checks, and 
        \item in round $r=2\mod{3}$ measure all the blue checks,
    \end{enumerate}
    such that after the first (finite number of) $T$ rounds, called the \textit{initialization rounds}, we have a code $\mathcal{C}_r$ of round $r\geq T$ that encodes $k$ logical qudits.
\end{definition}

\begin{figure}
    \centering
    \includegraphics[width=0.8\columnwidth]{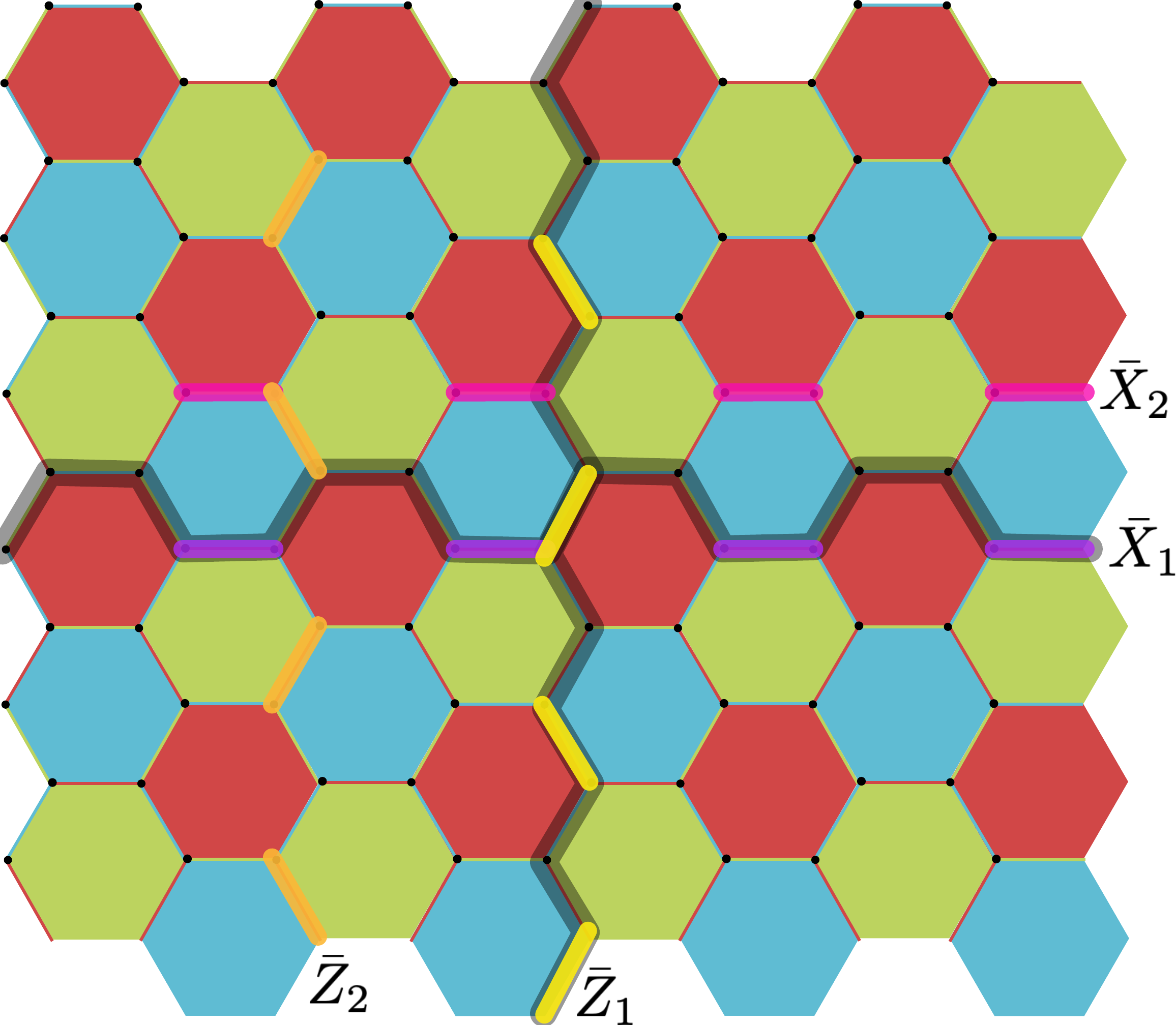}

    \vspace{0.8em}
    
    \includegraphics[width=0.8\columnwidth]{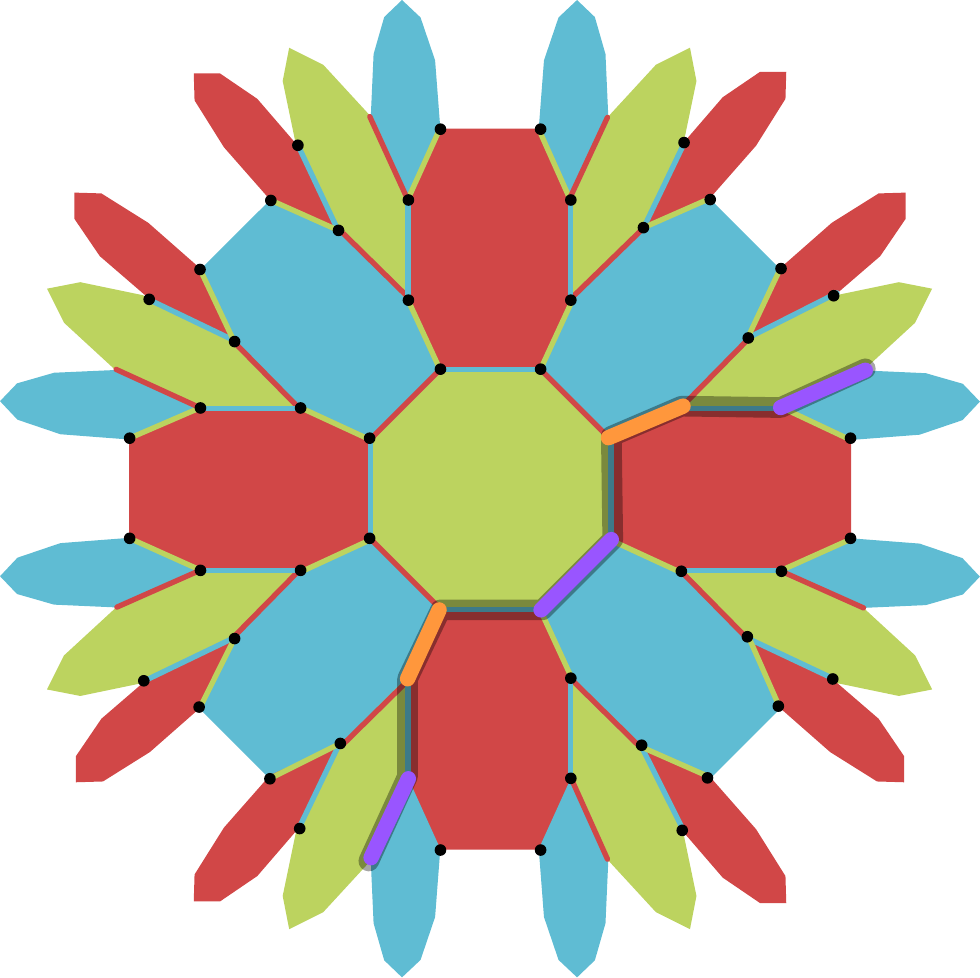}
    
    \caption{Top figure: Illustration of a $96$-qudit Floquet code on a hexagonal ``honeycomb'' lattice with degree 3 on the surface of a torus.
    The bottom and top boundaries represent the same set of edges, similarly for its left and right boundaries.
    Stabilizer group evolves with period 3, each encoding two logical qudits, is obtained by periodically measuring checks on green, red, and blue edges satisfying condition stated in Theorem~\ref{thm:general_qudit_floquet_code_conditions}.
    Edges highlighted in orange, yellow, purple and pink on non-trivial loops (e.g. chain of grey edges in vertical and horizontal directions) illustrate a representative of $\Bar{Z}_1,\Bar{Z}_2,\Bar{X}_1,\Bar{X}_2$ logical operators, respectively, for stabilizer group in round $r=0\mod{3}$.
    As can be seen in the figure (and can be verified for other rounds), the minimum weight of the logical operators in each round for the $96$-qudits Floquet code is $8$.
    Hence, the honeycomb qudit Floquet code has a distance $d=8$ and a code parameter $\llbracket 96,2,8\rrbracket$.
    Bottom figure: Illustration of qudit Floquet code on a part of a hyperbolic lattice with degree three and genus $g$.
    The stabilizer groups of the code, evolving with period three, encodes $2g$ logical qudits whenever the checks satisfy conditions in Theorem~\ref{thm:general_qudit_floquet_code_conditions}.
    A pair of commuting logical operators $\Bar{L},\Bar{L}'$ in round $r=0$ on a non-trivial loop (chain of grey edges) with support on orange edges and purple edges, respectively.
    }
    \label{fig:qudit_honeycomb_lattice}
\end{figure}

\subsection{Instantaneous stabilizer groups}\label{sec:ISG}

Now we describe how the codespaces are obtained from the measurement schedule.
Between checks of round $r$ and $r+1$ we obtain a qudit stabilizer group $\mathcal{S}_r$ of round $r$ called an \textit{instantaneous stabilizer group (ISG)}, since we are concerned only with Pauli check measurements.
ISG $\mathcal{S}_r$ defines a qudit stabilizer code $\mathcal{C}(\mathcal{S}_r)$ (see eqn.~\eqref{eqn:qudit_stabilizer_code}).
We start with round $r=-1$ with a stabilizer group $\mathcal{S}_{-1}$ which contains only the identity operator $I$, then the check measurements in the subsequent rounds will populate the ISGs $\mathcal{S}_0$, $\mathcal{S}_1$, and so on.
Namely, the check measurements in round $r$ updates ISG $\mathcal{S}_{r-1}$ to $\mathcal{S}_r$.

To do this we need a general set of rules to update a qudit stabilizer group $\mathcal{S}$ after a measurement of qudit Pauli observable $P$ with an outcome $o\in[D]$ to a new qudit stabilizer group $\mathcal{S}'$.
Here we generalize the qubit stabilizer update rules first shown in~\cite{gottesman1998heisenberg}\footnote{The qubit stabilizer update rules are used in~\cite{hastings2021dynamically,townsend2023floquetifying,delfosse2023spacetime,fu2024error} for Floquet codes and other dynamical error-correction schemes.} to qudit stabilizers. 

\begin{theorem}\label{thm:qudit_stabilizer_update}
    Update rules for an initial stabilizer $\mathcal{S}=\<g_1,\dots,g_m\>$ to stabilizer $\mathcal{S}'$ after a measurement of Pauli observable $P$ with outcome $o\in[D]$ :
    \begin{enumerate}
        \item If $\omega^a P\in\mathcal{S}$ for some $a\in[D]$, then $\mathcal{S}' = \mathcal{S}$, i.e. the stabilizer stays the same.
        In this case, outcome $o$ is deterministic.
        
        \item If $\omega^a P$ is not in $\mathcal{S}$ for all $a\in[D]$ but $P$ commutes with all $s\in\mathcal{S}$, then $\mathcal{S}' = \<\omega^{-o}P,g_1,\dots,g_m\>$.
        i.e. we simply add $\omega^{-o}P$ to the set of generators.
        
        \item If $P$ does not commute with each element in a subset $W = \{g_1,\dots,g_l\}$ (for $l\leq m$) of the generator set $G=\{g_1,\dots,g_m\}$ for $\mathcal{S}$, then $\mathcal{S}' = \<W',\omega^{-o}P,G\backslash W\>$.
        Here, $W'$ is a set of $l-1$ stabilizers from $\mathcal{S}$ such that $\mathcal{S}=\<W',g^*,G\backslash W\>$ for some arbitrarily chosen $g^*\in W$.
        Namely, here we replace generator $g^*$ with $\omega^{-o}P$, replace the rest of $W$ with $W'$, and keep the generators that commute with $P$.
    \end{enumerate}
\end{theorem}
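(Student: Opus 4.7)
The plan is to handle the three cases separately, in each case using the projector formula $\Pi_o^{(P)}=\frac{1}{D}\sum_{j=0}^{D-1}(\omega^{-o}P)^j$ from eqn.~\eqref{eqn:qudit_pauli_eigenspace_projector} to analyze how the measurement acts on a codestate $|\psi\>\in\mathcal{C}(\mathcal{S})$, and then read off the new stabilizer group from the post-measurement state. Throughout I will use the commutation relation $Pg=\omega^{c(P,g)}gP$ and the fact that $D$ is prime (so $\mathbb{Z}/D\mathbb{Z}$ is a field).

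For case~1, if $\omega^a P\in\mathcal{S}$ then $(\omega^a P)|\psi\>=|\psi\>$ for every codestate, so $P|\psi\>=\omega^{-a}|\psi\>$ and the measurement yields $o=-a$ with certainty while leaving the state untouched; hence $\mathcal{S}'=\mathcal{S}$. For case~2, since $P$ commutes with every generator, it commutes with every element of $\mathcal{S}$, so $\Pi_o^{(P)}$ does too. The post-measurement state $\Pi_o^{(P)}|\psi\>$ is therefore still stabilized by each $g_i$ (I would write $g_i\Pi_o^{(P)}|\psi\>=\Pi_o^{(P)}g_i|\psi\>=\Pi_o^{(P)}|\psi\>$) and by definition is an $\omega^o$-eigenstate of $P$, i.e.\ stabilized by $\omega^{-o}P$. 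Since $\omega^{-o}P$ commutes with every $g_i$ and (by the case hypothesis) no power of $\omega$ times $P$ lies in $\mathcal{S}$, the enlarged set $\<\omega^{-o}P,g_1,\dots,g_m\>$ is a genuine abelian stabilizer group and equals $\mathcal{S}'$.

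Case~3 is the main content. After choosing $g^*\in W$ with $c^*:=c(P,g^*)\neq 0$, I would, for each $g_i\in W\setminus\{g^*\}$, define $g_i':=g_i(g^*)^{k_i}$ where $k_i\equiv -c(P,g_i)\,(c^*)^{-1}\pmod D$; primality of $D$ makes $(c^*)^{-1}$ exist, and a direct commutator computation shows $Pg_i'=\omega^{c(P,g_i)+k_ic^*}g_i'P=g_i'P$. Setting $W':=\{g_i':g_i\in W\setminus\{g^*\}\}$, one checks that $\<W',g^*,G\setminus W\>=\mathcal{S}$ since $g_i=g_i'(g^*)^{-k_i}$ recovers the original generators. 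Every element of $W'\cup(G\setminus W)$ now commutes with $P$, and therefore with $\Pi_o^{(P)}$, so each of them stabilizes $\Pi_o^{(P)}|\psi\>$ by the same argument as in case~2; in addition $\omega^{-o}P$ stabilizes this state. Conversely, $g^*$ fails to stabilize the post-measurement state: one computes $g^*\Pi_o^{(P)}=\Pi_{o+c^*}^{(P)}g^*$, which maps the $\omega^o$-eigenspace of $P$ into the $\omega^{o+c^*}$-eigenspace (a different subspace because $c^*\neq0$), so $g^*$ must be dropped. The resulting group $\<W',\omega^{-o}P,G\setminus W\>$ is abelian (each new generator commutes with $P$ and with the untouched old generators, and abelianness among the $g_i'$ follows from that of the $g_i$ together with the $g^*$-shifts), giving the claimed $\mathcal{S}'$.

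The main obstacle is the bookkeeping in case~3: one must simultaneously guarantee that the new generator set (i) generates the same group $\mathcal{S}$ when $g^*$ is reinserted in place of $\omega^{-o}P$, (ii) consists of elements all commuting with $P$ except for the single replaced one, and (iii) actually exhausts the post-measurement stabilizer (nothing outside $\<W',\omega^{-o}P,G\setminus W\>$ stabilizes every state of the form $\Pi_o^{(P)}|\psi\>$). The first two follow from the modular-inverse construction, which is where primality of $D$ is essential; the third follows from a dimension count, as the centralizer of $P$ within $\mathcal{S}$ has index $D$, so replacing $g^*$ by $\omega^{-o}P$ preserves the rank of the stabilizer and hence the dimension of the stabilized subspace.
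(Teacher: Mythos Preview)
Your argument tracks the paper's closely for cases~1 and~2 and for the ``forward'' direction of case~3 (that every $\Pi_o^{(P)}|\psi\>$ is stabilized by $\mathcal{S}'$). Your explicit modular-inverse construction of $W'$ is in fact more detailed than the paper, which simply posits such a $W'$ by the property $\mathcal{S}=\<W',g^*,G\setminus W\>$ without writing it down.

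The one point of divergence is the completeness part of case~3 (your item~(iii)). The paper establishes $\mathcal{C}(\mathcal{S}')\subseteq\Pi_o^{(P)}\mathcal{C}(\mathcal{S})$ by first showing $\<\psi|P|\psi\>=0$ so that every outcome occurs with probability $1/D$, and then transforming a logical-$\Bar Z$ basis of $\mathcal{C}(\mathcal{S})$ into one that commutes with $P$, thereby exhibiting a basis of $\mathcal{C}(\mathcal{S}')$ as projected images. Your dimension count instead shows that $\mathcal{S}'$ has rank $m$ and hence $\dim\mathcal{C}(\mathcal{S}')=D^{n-m}$; but as written it does not yet show that $\dim\Pi_o^{(P)}\mathcal{C}(\mathcal{S})=D^{n-m}$, which is what~(iii) actually requires. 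The patch is already implicit in your own computation: from $g^*\Pi_o^{(P)}=\Pi_{o+c^*}^{(P)}g^*$ and $g^*|\psi\>=|\psi\>$ one gets $\|\Pi_{o+c^*}^{(P)}|\psi\>\|=\|\Pi_o^{(P)}|\psi\>\|$ for every $o$, and since $c^*$ is a unit in $\Z/D\Z$ these norms are all equal and therefore nonzero, so $\Pi_o^{(P)}$ is injective on $\mathcal{C}(\mathcal{S})$ and the dimension count goes through. With that one line added, your route and the paper's are equivalent; yours trades the paper's logical-operator bookkeeping for a cleaner rank argument.
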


The proof of Theorem~\ref{thm:qudit_stabilizer_update} is given in Appendix~\ref{app:qudit_stabilizer_update_proof}.
Using these rules we can then obtain the ISGs $\mathcal{S}_r$ in any round $r$ by applying these rules for the check measurements of round $r$ on ISG $\mathcal{S}_{r-1}$ as the initial qudit stabilizer group.
As we will show below, we obtain ISGs with a period of three after the initialization rounds when the checks in eqn.~\eqref{eqn:general_checks} satisfy certain commutation conditions.
Namely we have three distinct ISGs $\mathcal{S}_r$, $\mathcal{S}_{r+1}$, and $\mathcal{S}_{r+2}$ for round $r \mod{3}$ after initialization.
Moreover, each ISG encodes two logical qudits and all three ISGs share a common subset of generators which is used as the stabilizers with eigenvalues that correspond to error syndromes inferred from the check measurements.
These common ISG generators are the weight-six stabilizers on each of the hexagonal faces on the lattice, known as the \textit{plaquette stabilizers} (or simply as ``plaquette'', following the terminology used in the Floquet code literature~\cite{hastings2021dynamically,davydova2023floquet,fahimniya2023hyperbolic} and topological codes in general, see e.g. lecture notes~\cite{Browne_2014}).
Now we state the commutation conditions for checks in eqn.~\eqref{eqn:general_checks} to obtain a qudit Floquet code with ISGs that encodes $2g$ logical qudits.

\begin{theorem}\label{thm:general_qudit_floquet_code_conditions}
    An $n$-qudit Floquet code on a three-colorable $\{p,3\}$-lattice on a genus-$g$ surface (for $p=6$ for $g=1$ and even $p\geq8$ for $g\geq2$) encodes $k=2g$ logical qudits if the following conditions for checks $C_g,C_r,C_b$ are satisfied:
    \begin{enumerate}
        \item It holds that
        \begin{equation}
        \begin{aligned}
            c(P_{v_g},P_{v_b}) = - c(P_{u_g},P_{u_b}) \;,\;\forall (v,u)\in E_g \\
            c(P_{v_r},P_{v_g}) = - c(P_{u_r},P_{u_g}) \;,\;\forall (v,u)\in E_r \\
            c(P_{v_b},P_{v_r}) = - c(P_{u_b},P_{u_r}) \;,\;\forall (v,u)\in E_b
        \end{aligned} \;.
        \end{equation}

        \item For all vertex $v\in V$,
        \begin{equation}\label{eqn:checks_noncommute}
        \begin{aligned}
            c(P_{v_g},P_{v_r}) \neq 0 \\
            c(P_{v_r},P_{v_b}) \neq 0 \\
            c(P_{v_b},P_{v_g}) \neq 0 
        \end{aligned} \;.
        \end{equation}
        
        \item For all edge $(v,u)\in E$,
        \begin{equation}
        \begin{aligned}
            [P_{v_g} P_{v_r} P_{v_b}]_v \, [P_{u_g} P_{u_r} P_{u_b}]_u = I^{\otimes n} \;.
        \end{aligned}
        \end{equation}
    \end{enumerate}
\end{theorem}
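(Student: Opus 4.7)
The plan is to establish the theorem in three steps: (i) construct a plaquette stabilizer $S_f$ on each face and verify it commutes with every check, (ii) use Theorem~\ref{thm:qudit_stabilizer_update} to show that after a short initialization the ISG sequence is periodic with three distinct ISGs of the form $\mathcal{S}_r = \langle C_{l(r)} \cup \{S_f\}_{f}\rangle$, and (iii) count independent generators via Euler's formula to derive $k = 2g$.

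For step~(i), for a color-$l$ face $f$ with boundary vertices $v_1,\dots,v_p$ whose boundary edges alternate between the two non-$l$ colors, I would define
\begin{equation*}
    S_f = \prod_{i=1}^{p} M_{(v_i, v_{i+1})},
\end{equation*}
which evaluates at each boundary vertex to a single-qudit Pauli of the form $P_{v_i,l'}P_{v_i,l''}$, proportional to $P_{v_i,l}^{-1}$ by Condition~3. Condition~1 enters in two ways: first, to verify that $S_f$ is a well-defined Pauli up to a global phase, by showing that reorderings of the factors around $\partial f$ produce cancelling commutation phases; second, to show $S_f$ commutes with every check $M_{(v,u)}$. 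A check shares at most two vertices with $\partial f$; at each shared vertex a phase $\omega^c$ is picked up, and Condition~1 (possibly rewritten via the bilinearity of $c(\cdot,\cdot)$ and Condition~3 to convert between color-pairs) ensures that these phases cancel at the two endpoints of each shared boundary edge, while single-shared-vertex cases collapse via $P_{v,l'}P_{v,l''}\propto P_{v,l}^{-1}$. Hence once a plaquette enters the ISG, it persists in every subsequent ISG.

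For step~(ii), starting from $\mathcal{S}_{-1}=\{I\}$, round $0$ populates $\mathcal{S}_0 = \langle C_g\rangle$ via case~2 of Theorem~\ref{thm:qudit_stabilizer_update}, since all green checks commute pairwise (distinct green edges are vertex-disjoint, as each vertex has exactly one edge of each color). In round~$1$, each red check $M_{(v,u)}$ fails to commute only with the two green checks at $v$ and $u$ (Condition~2), and Condition~1 applied to red edges ensures the two anticommutation phases are negatives of each other; case~3 of Theorem~\ref{thm:qudit_stabilizer_update} with $W' = \{M_{(v,v')}M_{(u,u')}\}$ then adds the red check as a generator while replacing the two anticommuting green checks by their product. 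Iterating through rounds $1,2,3$ and invoking the persistence from step~(i) shows that for $r \geq T$ (with $T$ a small constant), the ISG takes the claimed form, establishing period-$3$ dynamics.

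For step~(iii), $3V = pF = 2E$ together with Euler's formula $V - E + F = 2-2g$ on a $\{p,3\}$-tessellation gives $n = V = 2E/3$ and $F - E/3 = 2 - 2g$. The $|C_{l(r)}| + F = E/3 + F$ listed generators admit exactly two independent multiplicative relations: first, by Condition~3 at each vertex, the product of all color-$l(r)$ plaquettes equals (up to a global phase) the inverse of the product of all color-$l(r)$ checks; second, the product of all color-$l'$ and all color-$l''$ plaquettes (the two non-$l(r)$ colors) equals (up to a global phase) a power of the product of color-$l(r)$ checks. Confirming via a rank argument on the symplectic $\mathbb{Z}_D$-vector space of $n$-qudit Paulis that no further relations arise yields $m = E/3 + F - 2 = 2E/3 - 2g$, and hence $k = n - m = 2g$. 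The main obstacle lies in step~(i), where commutation-phase bookkeeping is delicate: Condition~1 is formulated only for specific color-pairs on each edge color, so establishing commutation of $S_f$ with all checks generally requires rewriting commutation functions via the bilinearity of $c(\cdot,\cdot)$ and Condition~3; a secondary subtlety is the rigorous enumeration of exactly two independent relations in step~(iii), ultimately a topological statement about the three-colored tessellation of a genus-$g$ surface.
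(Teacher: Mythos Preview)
Your proposal follows essentially the same strategy as the paper: build plaquette operators as products of boundary checks, use the update rules of Theorem~\ref{thm:qudit_stabilizer_update} to establish a period-3 ISG of the form $\langle A_g,A_r,A_b,C_{l(r)}\rangle$, and count independent generators to get $k=2g$. Your two multiplicative relations in step~(iii) are algebraically equivalent to the paper's (product of all plaquettes $=I$, and product of color-$l$ checks with color-$l$ plaquettes $=I$), and both proofs leave the claim ``no further relations'' as a topological assertion rather than a full rank computation.

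The one place your outline is thinner than the paper is step~(ii). You treat a single red check in round~1 and then write ``iterating through rounds $1,2,3$.'' The paper carries out this iteration explicitly over \emph{five} initialization rounds and, crucially, distinguishes between \emph{unformed} plaquettes $A_l'$ (products of one color of boundary checks only, which still anticommute with some checks) and \emph{formed} plaquettes $A_l$ (products of both boundary-check colors, which commute with everything). Your step~(i) defines $S_f$ directly as the formed plaquette, but the update process first produces the unformed version; the transition $A_l'\mapsto A_l$ happens in the following round when the second color of checks is measured and multiplied in via case~3. Your persistence argument (``once a plaquette enters the ISG, it persists'') is correct for the formed $S_f$, but you have not shown that the formed $S_f$ actually arises from the measurement dynamics rather than just that it would survive if it did. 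This is a detail you can fill in along the paper's lines, not a structural flaw.
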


Detailed proof of Theorem~\ref{thm:general_qudit_floquet_code_conditions} above is given in Appendix~\ref{app:general_qudit_floquet_code_conditions}.
Here we discuss the intuition behind the conditions and give a brief outline of the proof.
In the proof outline given below, we mainly focus on how the measurement schedule of the checks satisfying conditions in Theorem~\ref{thm:general_qudit_floquet_code_conditions} for the qudit Floquet code for simplicity.
A generalization to hyperbolic lattices resulting from $\{p,3\}$ tesselation for even $p\geq8$ is straightforward and can be found in Appendix~\ref{app:general_qudit_floquet_code_conditions}.

Condition 2 of Theorem~\ref{thm:general_qudit_floquet_code_conditions} simply states that any two checks that shares a qudit in their supports do not commute.
Condition 1 on the other hand implies that a color-$l$ check on $(v,u)$ commutes with the product of two checks of color $l'$ with support on qudit $v$ and qudit $u$.
Condition 3 states that given an edge $(v,u)$ the product of checks with support on $v$ or $u$ or both must have an identity $I$ on each qudit $v$ and $u$.
Hence the product of all checks must be $I^{\otimes n}$ as a consequence.

\begin{figure*}
    \centering
    \includegraphics[width=0.6\textwidth]{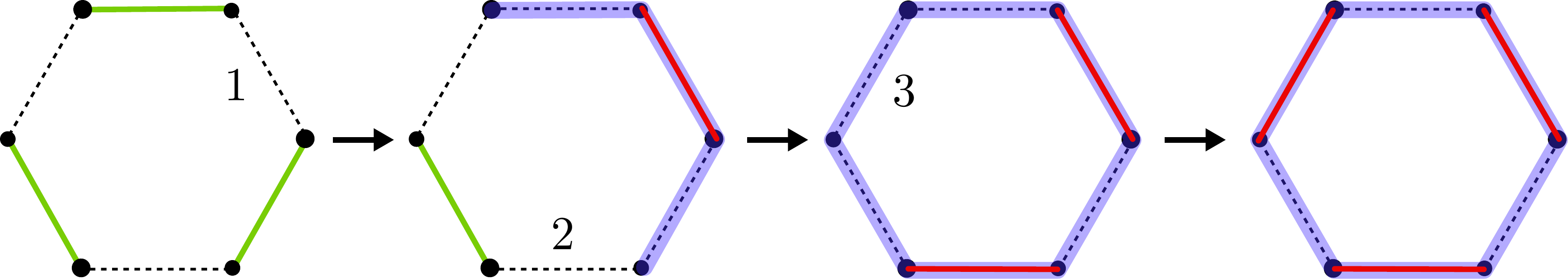}
    
    \vspace{0.5em}
    
    \includegraphics[width=1\textwidth]{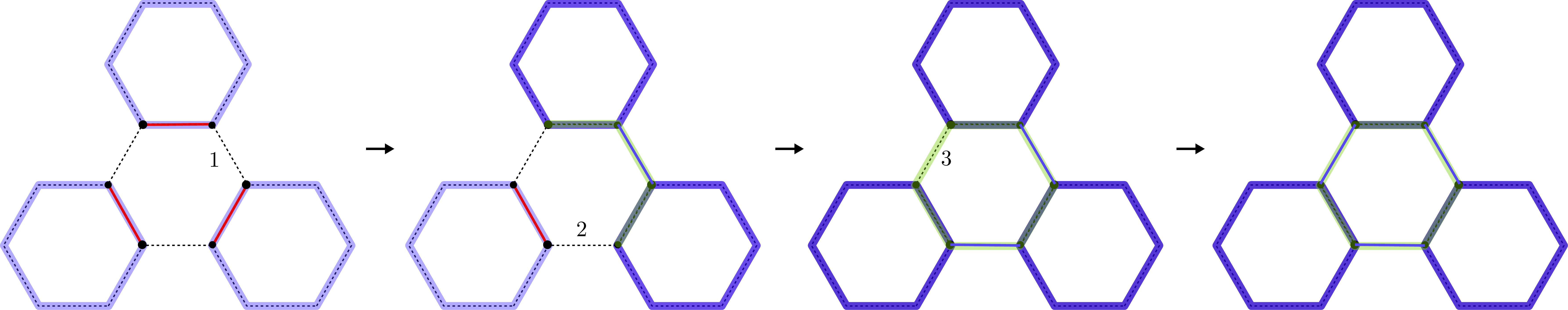}
    \caption{First row: Update sequence on a blue plaquette in round $r=1$ where we measure three red checks round it.
    Second row: Update sequence on a green and three blue plaquettes in round $r=2$ where we measure the blue checks surrounding the green plaquette.
    Each step in the sequence illustrates the set of stabilizer generators as colored connected edges and the check measured at that step as the numbered edge, which leads to the updated generators illustrated in the next step of the sequence.
    Updates are performed according to the update rules in Theorem~\ref{thm:qudit_stabilizer_update}.
    Instantaneous stabilizer groups (ISGs) in each round of the qudit Floquet code is obtained by performing one of the types of update sequence as the three checks are surrounding each plaquette are measured (the first or second row in the figure).
    For example, the second step in the top row shows three generators: a weight-2 generator (green edge), another weight-2 generator (red edge), and a weight-4 generator (blue edges).
    In this step a check on the edge labeled by ``2'' is measured, leading to an updated generators in the next step of the diagram: a weight-6 generator corresponding to a blue plaquette stabilizer (blue edges) and two weight-2 generators corresponding to two red checks (two red edges).
    In the second row, the green plaquette is updated as in the first row.
    However three blue plaquettes surrounding the middle green plaquette are also updated as blue checks are performed, where initial blue plaquettes are in light-blue color and updated blue plaquettes are in dark-blue color.}
    \label{fig:plaquette_update}
\end{figure*}

Let $C_l$ and $A_l$ be the set of all checks and all plaquette stabilizers of color $l\in\{g,r,b\}$, respectively.
When the checks satisfy these three conditions in Theorem~\ref{thm:general_qudit_floquet_code_conditions}, ISGs $\mathcal{S}_r$ after check measurement in round $r\geq0$ evolve as:
\begin{enumerate}
    \item At round $r=-1$ we start with ISG $\mathcal{S}_{-1}$ consisting only of the identity $I^{\otimes n}$.

    \item At round $r=0$ the ISG is $\mathcal{S}_0 = \<C_g\>$.

    \item At round $r=1$ the ISG is $\mathcal{S}_1 = \<A_b',C_r\>$.

    \item At round $r=2$ the ISG is $\mathcal{S}_2 = \<A_g',A_b,C_b\>$.

    \item At round $r=3$ the ISG is $\mathcal{S}_3 = \<A_r',A_g,A_b,C_g\>$.

    \item At round $r=4$ the ISG is $\mathcal{S}_4 = \<A_r,A_g,A_b,C_r\>$.
    
    \item At round $r>4$ the ISG is $\mathcal{S}_r = \<A_r,A_g,A_b,C_l\>$ where $l=g$ if $r=0\mod{3}$ and $l=r$ if $r=1\mod{3}$ and $l=b$ if $r=2\mod{3}$.
\end{enumerate}
Here $A_l'$ is the set of ``unformed'' plaquette stabilizers which consists of plaquette stabilizers of color $l$ that will be changed in a later initialization round to $A_l$.
The reason for this distinction is that plaquette stabilizers in $A_l'$ are non-commuting with some checks, hence needs to be updated to plaquette stabilizers $A_l$ which commute with all checks and all other plaquette stabilizers.
Here we assume throughout the initialization round that all check measurements has an outcome of $0$ corresponding to the eigenvalue $\omega^0=+1$ of the check\footnote{In general, if the outcome of a check in round $r$ is $o\neq 0$, then some of the checks and plaquette stabilizers that are updated from that check measurement (according to update rules in Theorem~\ref{thm:qudit_stabilizer_update}) in ISG $\mathcal{S}_r$ will be multiplied by a phase factor of $\omega^{-o}$.}.
Throughout the rest of this paper we will use outcome $0$ and $+1$ interchangeably, as will be made clear from the context.

Note that the non-commutation conditions between checks (condition 2 of Theorem~\ref{thm:general_qudit_floquet_code_conditions}) induces the periodic dynamics of the ISG as the checks in round $r$ are non-commuting with the checks in round $r-1$.
In each round $r$ we apply the appropriate update rule in Theorem~\ref{thm:qudit_stabilizer_update} to update the generators of ISG $\mathcal{S}_{r-1}$ and obtain the ISG $\mathcal{S}_r$ of round $r$ which induces a period-3 evolution of the ISGs, i.e. $\mathcal{S}_r=\mathcal{S}_{r+3}$ for $r\geq4$ after the initialization rounds (the first five rounds of check measurements).

We start with stabilizer group $\mathcal{S}_{-1}$ consisting only of the $n$-qudit identity Pauli $I^{\otimes n}$.
After measuring the green checks $C_g$ in the first round $r=0$, we simply apply update rule 2 and add all green checks to the stabilizer generator to obtain round-0 ISG $\mathcal{S}_0=\<C_g\>$. 
In the rest of initialization rounds $r\in\{1,2,3,4\}$, we do one of the two types of update sequences on the three checks surrounding a plaquette illustrated in Fig.~\ref{fig:plaquette_update} to form the plaquette stabilizers.

The first type of update sequence is illustrated in the first row of Fig.~\ref{fig:plaquette_update} for round $r=1$.
Here we obtain a blue plaquette stabilizer as we sequentially measure red checks on edges labeled by 1, 2, 3 which are going to be in the generator of ISG $\mathcal{S}_1$.
As ISG $\mathcal{S}_0$ is generated only by the green checks $C_g$, a blue plaquette is surrounded by three green checks at the start of this round.
Measuring red check 1 and 2 invokes update rule 3 as the green checks and red checks are non-commuting.
Measuring red check 1 removes two green checks and adds a weight-4 Pauli which is a product of the two green checks.
Then, measuring red check 2 removes the weight-4 Pauli and the remaining green check and adds a weight-6 Pauli which is a product of the weight-4 Pauli and the remaining green check.
These weight-4 Pauli and weight-6 Pauli are the blue colored edges in second and third step in the top diagram of Fig.~\ref{fig:plaquette_update}, respectively.
These two updates are due to the second line of condition 2 in Theorem~\ref{thm:general_qudit_floquet_code_conditions} which says that a red check commutes with the product of two green checks which it has common support with.
By the same reasoning, red check 3 commutes with the weight-6 Pauli in blue as it is simply the product of the three green checks.
Thus we simply apply update rule 2 and add this red check to the generators.
By repeating the same update for the three red checks surrounding each blue plaquette, we obtain a weight-6 Pauli for each blue plaquette which is precisely the ``unformed'' plaquette stabilizers $A_b'$ in ISG $\mathcal{S}_1$.
Thus we obtain $\mathcal{S}_1=\<A_b',C_r\>$.
A generalization to hyperbolic lattices from a $\{p,3\}$ tesselation on surfaces of genus $g\geq2$ is straightforward as we require $p\geq8$ to be even.
In this case we label the red edges as $1,\dots,p/2$ then perform update rule 3 for the first $p/2-1$ red checks and update rule 2 for the last red check to obtain unformed blue plaquette stabilizers $A_b'$.

The second type of update sequence, illustrated in the second row of Fig.~\ref{fig:plaquette_update}, is performed in round 2, 3, and 4.
This update sequence adds unformed plaquette stabilizers with color of the next round's check, as well as updates the unformed plaquette stabilizers $A_l'$ obtained in the previous round to the initialized plaquette stabilizers $A_l$.
For example, in round $r=2$ we start with three unformed blue plaquette stabilizers and three red checks surrounding a green plaquette (first sequence in the second row of Fig.~\ref{fig:plaquette_update}).
Blue check 1 does not commute with two unformed blue plaquette stabilizers as well as two red checks.
Thus they will invoke update rule 3 which gives us an updated blue plaquettes and a weight-4 Pauli.
The weight-4 Pauli is simply the product of two red checks, but an updated blue plaquette stabilizer is a product between an unformed blue plaquette and a red check on its boundary.
This updated blue plaquette commute with blue check 1 since the product of Paulis of all three checks with support on a given qudit on vertex $v$ is proportional to the identity $I$ by condition 3 of Theorem~\ref{thm:general_qudit_floquet_code_conditions}.
These updated blue plaquettes are illustrated as dark blue edges in second row of Fig.~\ref{fig:plaquette_update}, where in the second step the top blue plaquette is the product between the top unformed blue plaquette (top light blue in step 1) and the top red check.
The remaining unformed blue plaquettes are updated in the same way, whereas for blue check 3 it can be shown using the last line of condition 3 that it does commute with the two updated blue plaquette stabilizers.

By repeating this for each green plaquette and three blue plaquettes surrounding it, we obtain the set of unformed green plaquette stabilizers $A_g'$ and initialized blue plaquette stabilizers $A_b$ where each plaquette in $A_b$ is the product between an blue plaquettes with three red checks surrounding it.
Thus we obtain ISG $\mathcal{S}_2=\<A_g',A_b,C_b\>$.
For qudit Floquet code on hyperbolic lattices with $\{p,3\}$ tesselation we again label the blue edges surrounding a given green plaquette as $1,\dots,p/2$.
The surrounding unformed blue plaquettes are updated as we obtain the unformed green plaquette in the first $p/2-1$ blue checks, whereas we can simply add the last blue check to the stabilizer group generator as they commute with the updated blue plaquettes and the unformed green plaquette.

Updates in initialization round 3 is performed in the same way as round 2 where we update unformed plaquette stabilizers $A_g'\mapsto A_g$ and obtain the unformed red plaquettes $A_r'$, while keeping the blue plaquettes $A_b$.
Whereas updates in initialization round 4 simply updates the unformed red plaquettes $A_r'\mapsto A_r$ while keeping the green and blue plaquettes $A_g,A_b$.
The reason why plaquette stabilizers $A_g$ and $A_b$ remains as a stabilizer generator is due to the fact that all checks commute with updated blue plaquette stabilizers, which again can be shown using condition 1.
Thus in round $r>4$ then the plaquette stabilizers remain in the ISGs and check measurements simply replaces the checks from the previous round in the stabilizer group generator.

\subsection{Encoding rate}

The qudit Floquet code construction specified in Theorem~\ref{thm:general_qudit_floquet_code_conditions} have a nice feature where its encoding rate $\frac{k}{n}$ approaches $\frac{1}{2}$ as the number of physical qudits $n$ and the plaquette stabilizer weight $p$ grows larger, as stated in the following proposition.

\begin{proposition}
    Encoding rate of an $n$-qudit Floquet code on a three-colorable $\{p,3\}$-lattice on a genus-$g$ surface for $p\geq6$ and $g\geq2$ with checks $C_g,C_r,C_b$ satisfying conditions in Theorem~\ref{thm:general_qudit_floquet_code_conditions} has an encoding rate of 
    \begin{equation}
        \frac{k}{n} = \frac{1}{2}-\frac{3}{p}+\frac{2}{n} \;.
    \end{equation}
    Hence the encoding rate is $\frac{k}{n} \rightarrow \frac{1}{2}$ as $n\rightarrow\infty$ and $p\rightarrow\infty$.
\end{proposition}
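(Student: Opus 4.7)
The plan is to invoke Euler's formula together with the combinatorial constraints inherent to a $\{p,3\}$-tessellation, and then apply the value $k=2g$ that we have already established in Theorem~\ref{thm:general_qudit_floquet_code_conditions}. Since the physical qudits live on the vertices of the lattice, we have $n=V$ where $V$, $E$, $F$ denote the number of vertices, edges, and faces respectively; the claim is therefore a statement purely about the face/vertex/edge counts of the underlying tessellation.

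First, I would record the three counting identities. Euler's formula for a closed orientable genus-$g$ surface gives
\begin{equation*}
    V - E + F = 2 - 2g.
\end{equation*}
The $\{p,3\}$-tessellation provides two further relations: every face is a $p$-gon and every edge borders exactly two faces, so $pF = 2E$; every vertex has degree $3$ and every edge has two endpoints, so $3V = 2E$. Solving these gives $E = 3V/2$ and $F = 3V/p$.

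Next, I would substitute these into Euler's formula. With $n = V$, this becomes
\begin{equation*}
    n - \frac{3n}{2} + \frac{3n}{p} \;=\; 2 - 2g,
\end{equation*}
which simplifies to $\left(\tfrac{1}{2} - \tfrac{3}{p}\right) n = 2g - 2$. Dividing by $n$ yields $\tfrac{1}{2} - \tfrac{3}{p} = \tfrac{2g-2}{n}$.

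Finally, I would plug in $k = 2g$ from Theorem~\ref{thm:general_qudit_floquet_code_conditions} and rewrite $2g = (2g-2) + 2$ to obtain
\begin{equation*}
    \frac{k}{n} \;=\; \frac{2g}{n} \;=\; \frac{2g-2}{n} + \frac{2}{n} \;=\; \frac{1}{2} - \frac{3}{p} + \frac{2}{n},
\end{equation*}
which is the desired identity. The asymptotic claim $k/n \to 1/2$ as $n,p\to\infty$ then follows immediately since both the $3/p$ and $2/n$ terms vanish. The argument is essentially a one-line consequence of Euler's formula, so there is no substantive obstacle to overcome; the only mild subtlety is noting that the formula is derived assuming $p>6$ (hyperbolic case), but it remains consistent in the torus case $g=1$, $p=6$, where both sides of $(1/2 - 3/p)n = 2g-2$ vanish and the identity $k/n = 2/n$ still holds.
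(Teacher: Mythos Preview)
Your proof is correct. The paper's argument is organized slightly differently: rather than invoking Euler's formula and $k=2g$ directly, it takes the intermediate stabilizer-counting identity $k = \tfrac{n_p p}{6} - n_p + 2$ from the appendix together with the relation $n_p = 3n/p$ (cited from the literature), and substitutes one into the other. The two routes are equivalent, since that appendix identity is itself shown to equal $2g$ via what amounts to Euler's formula. Your version is arguably more self-contained, as it uses only the final statement $k=2g$ of Theorem~\ref{thm:general_qudit_floquet_code_conditions} and standard facts about tessellations, without needing the intermediate formula or the external reference; the paper's version, on the other hand, makes the role of the plaquette count $n_p$ more visible.
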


\begin{proof}
    Consider the relation $n_p=n\frac{q}{p}$ (see~\cite{albuquerque2009topological}) between the number of plaquettes $n_p$ and the number of qudits $n$ and $\{p,q\}$-tesselation parameters $p$ and $q$ where $p\geq8$ is even and $q=3$.
    Then by expressing the number of logical qudits $k$ in terms of $n_p$ and $p$ as 
    \begin{equation}
        k = \frac{n_pp}{6} - n_p + 2
    \end{equation}
    (see eqn.~\eqref{eqn:relation_logical_qudits_plaquettes_hyeprbolic} in Appendix~\ref{app:general_qudit_floquet_code_conditions}) and substituting in $n_p=n\frac{3}{p}$, we obtain
    \begin{equation}
    \begin{aligned}
        \frac{k}{n} &= \frac{n_pp/6 - n_p + 2}{n} \\
        &= \frac{n/2 - 3n/p + 2}{n} \\
        &= \frac{1}{2}-\frac{3}{p}+\frac{2}{n} \;,
    \end{aligned}
    \end{equation}
    as claimed.
\end{proof}

We note that a similar rate was shown for qudit hyperbolic Floquet code in~\cite{higgott2023constructions} in terms of a different parameterization for the lattice construction.
Although similar asymptotic encoding rate can also be achieved by hyperbolic stabilizer codes, such as those constructed in~\cite{albuquerque2009topological,delfosse2013tradeoffs}, one gets a more demanding hardware requirements as the weight of stablizier grows as these stabilizers are directly measured.
On the other hand in the qudit Floquet code, one can simply infer the eigenvalues of these weight-$p$ stabilizers by measuring the weight-2 checks sequentially.

\subsection{Subsystem code and gauge group of qudit Floquet code}

Similar to the qubit honeycomb Floquet code~\cite{hastings2021dynamically}, a qudit Floquet code can be thought of as a subsystem code~\cite{poulin2005stabilizer} on a $\{p,q\}$ three-colorable lattice with gauge group $\mathcal{G}$ generated by the checks, i.e. $\mathcal{G} = \<C_g,C_r,C_b\>$.
We now identify the center of gauge group $\mathcal{G}$, denoted by $\mathcal{Z}(\mathcal{G})$ which contains all operators in $\mathcal{G}$ that commute with all operator $g\in\mathcal{G}$.
The center of gauge group $\mathcal{G}$ corresponds to the stabilizer group $\mathcal{S}=\mathcal{Z}(\mathcal{G})$ that defines the code space of the subsystem code.
Here we will show that as in the case of qubit honeycomb Floquet code~\cite{hastings2021dynamically}, that the qudit Floquet code with checks satisfying conditions in Theorem~\ref{thm:general_qudit_floquet_code_conditions} does not encode any logical information when treated as a subsystem code.
However, we note that qudit topological subsystem codes with checks that do not satisfy the conditions in Theorem~\ref{thm:general_qudit_floquet_code_conditions} could still encode some logical information.
One of such qudit topological subsystem code construction is given in~\cite{ellison2023pauli}, which stabilizer group contains some non-local stabilizers.

Stabilizer group $\mathcal{S}=\mathcal{Z}(\mathcal{G})$ consists of product of checks on a cycle in the lattice, e.g. product of six checks around a plaquette and product of all checks along a non-contractible loop around the torus.
Similar to the proof of Theorem~\ref{thm:general_qudit_floquet_code_conditions} above (see Appendix~\ref{app:general_qudit_floquet_code_conditions}), the operators obtained by taking the products on a cycle commute with all checks, hence they must be in $\mathcal{S}$.
Moreover since checks of different color do not commute with each other, product of checks that do not form a cycle does not commute with some element of $\mathcal{G}$.
Thus $\mathcal{S}$ consists of product of checks along each cycle.

Now note that by condition 3 of Theorem~\ref{thm:general_qudit_floquet_code_conditions} that the product of all checks is equal to identity, and therefore there are $|E|-1 = \frac{3n}{2}-1$ independent generators of $\mathcal{G}$ (since there are $3n/2$ edges, see Appendix~\ref{app:general_qudit_floquet_code_conditions}).
As shown in the proof of Theorem~\ref{thm:general_qudit_floquet_code_conditions} that the product of all plaquette operators (each obtained by taking the product of the six checks around them) is equal to identity, therefore there can only be $n_p-1$ plaquette operators in the independent generator set for $\mathcal{S}$.
In the case of qudit Floquet code on a torus (genus $g=1$ surface) we have $n_p=\frac{n}{2}$, whereas for a hyperbolic qudit Floquet code on a surface with genus $g\geq2$ with $\{p,3\}$-tesselation we have $n_p=\frac{3n}{p}$.
Additionally we also have $2g$ independent non-contractible loops as generators of $\mathcal{S}$, which is also in the independent generator set for $\mathcal{S}$ (as these cannot be generated by the plaquette operators).

Thus we have $\frac{3n}{p}-1$ plaquette operators and $2g$ non-contractible loops in the independent generator set for $\mathcal{S}$.
By using the relation $n_p = n\frac{3}{p}$ and eqn.~\eqref{eqn:relation_logical_qudits_plaquettes_hyeprbolic}, we have 
\begin{equation}
\begin{aligned}
    \frac{3n}{p}-1 + 2g &= \frac{3n}{p}-1 + \frac{n_pp}{6}-n_p+2 \\
    &= \frac{3n}{p}-1 + \frac{n}{2}-\frac{3n}{p}+2 \\
    &= \frac{n}{2}+1
\end{aligned}
\end{equation}
independent generators for $\mathcal{S}$.
Now since there are $\frac{3n}{2}-1$ independent generators of the gauge group $\mathcal{G}$ and $\frac{n}{2}+1$ independent generators of stabilizer group $\mathcal{S}$ we have $(3n/2-1) - (n/2+1) = n-2$ logical operators for the gauge qudits.
Since each gauge qudit has a logical $X$ and logical $Z$ operators, we have a total of $n/2-1$ gauge qudits.
Now as the total number of independent stabilizer generators plus the number of gauge qudits is equal to the number of physical qudits $n = (n/2-1) + (n/2+1)$, we can conclude that the subsystem code does not encode any logical information.

\subsection{Subfamilies of the qudit Floquet code}

As the reader may note, so far we have not explicitly express the Pauli operators corresponding to check measurements and plaquette stabilizers of a qudit Floquet code obtained from conditions in Theorem~\ref{thm:general_qudit_floquet_code_conditions}.
In Section~\ref{sec:cirsq_qudit_floquet_code} we construct a novel subfamily of qudit Floquet codes called the $(\vcirc,\vsq)$ qudit Floquet code satisfying these conditions by imposing an additional feature to a lattice obtained from any $\{p,3\}$ tesselation for even $p\geq8$.
In this construction

We also note that the honeycomb qudit Floquet code construction proposed in~\cite{ellison2023floquet} is an instance of our construction as specified by conditions in Theorem~\ref{thm:general_qudit_floquet_code_conditions}.
In~\cite{ellison2023floquet}, the check measurement corresponding to a given edge is determined by their direction on the lattice.
By considering the cardinal directions of the lattice illustrated in Fig.~\ref{fig:qudit_honeycomb_lattice}, the edges with direction from south-west to north-east are labeled by $x$, and edges with direction from south-east to north-west are labeled by $x$, and horizontal edges are labeled by $z$.
Consider an edge $(v,u)$.
If it is labeled by $x$ correspond to a $[X]_v [X]_u$ check, whereas if it has a label $y$ then it corresponds to a $[(XZ^D)^\dag]_v [(XZ^D)^\dag]_u$ check, and if it has label $z$ then it corresponds to a $[Z^D]_v[Z^D]_u$ check for qudit Floquet code with dimension $D\geq2$.
Note that condition 1 of Theorem~\ref{thm:general_qudit_floquet_code_conditions} is satisfied as for any edge $(v,u)$ of color $l$, the checks of color $l'$ with support on $v$ and on $u$ satisfy the corresponding commutation relation in condition 1, e.g. $c([X]_v,[(XZ^D)^\dag]_v) = -c([X]_v,[XZ^D]_v) = -c([X]_u,[Z^D]_u)$ for an $x$ edge.
Condition 2 is also satisfied since any pair of Paulis from $X,(XZ^D)^\dag,Z^D$ does not commute.
Lastly, condition 3 is satisfied since $X(XZ^D)^\dag Z^D \otimes X(XZ^D)^\dag Z^D = I\otimes I$.

\section{Logical operators of the qudit Floquet code}\label{sec:logical_opers}

For a qudit Floquet code on a lattice with genus $g$ and checks that satisfy the conditions in Theorem~\ref{thm:general_qudit_floquet_code_conditions}, there are $2g$ logical operators $\Bar{X}_1,\Bar{Z}_1,\Bar{X}_2,\Bar{Z}_2 , \dots, \Bar{X}_{2g},\Bar{Z}_{2g} \in \mathcal{N}(\mathcal{S}_r)\backslash\mathcal{S}_r$ for any given round $r$ after initialization.
Each pair $\Bar{X}_j,\Bar{Z}_j$ correspond to a logical $X$ and $Z$ operator for a logical qudit, thus must satisfy the same commutation relations.
Due to the periodic evolution of the stabilizer group, i.e. the ISGs, the logical operators of the qudit Floquet code also evolves dynamically over the measurement rounds.

In each round $r$ (after initialization), logical operators can be defined as a product of Paulis on the vertices of a non-trivial loop (i.e. chain of edges that are not on the boundaries of any set of plaquettes).
On a genus $g$ lattice we can identify $2g$ pairs of non-trivial loops, where on each ``hole'' in the lattice we can define two pairs of logical operators $\Bar{X}_j,\Bar{Z}_j$ and $\Bar{X}_{j'},\Bar{Z}_{j'}$ corresponding to $X$ and $Z$ logical operators of two logical qudits on four non-trivial loops.
Two of these logical operators which commutes are on two non-trivial loops in the same direction, one of which we call a \textit{type-1 logical operator} and the other a \textit{type-2 logical operator}.
These operators are illustrated for the honeycomb qudit Floquet code in Fig.~\ref{fig:qudit_honeycomb_lattice}.
The two type-1 logical operators are illustrated as an operator with support on yellow edges (vertical loop) and an operator with support on pink edges (horizontal loop), which we identify as a representation of $\Bar{Z}_1$ and $\Bar{X}_2$ operator, respectively.
Meanwhile, two type-2 logical operators are illustrated as an operator with support on orange edges (vertical loop) and an operator with support on purple edges (horizontal loop), which we identify as a representation of $\Bar{Z}_2$ and $\Bar{X}_1$ operator, respectively.

A type-1 logical operator in round $r$ consists of Pauli operators on a non-contractible loop with non-identity Paulis on the round $r$ checks.
On the other hand, a type-2 logical operator in round $r$ consists of Pauli operators on a non-contractible loop with non-identity Paulis on the round $r+1$ checks.
To determine the form of each logical operator, consider two non-contractible loops in horizontal direction and vertical direction on a hole in the lattice $L_x,L_z\subset E$ (each is a set of edges in that loop).
On each loop we can define a type-1 logical operator and a type-2 logical operator.
A type-2 logical operator on $L_x$ of $\mathcal{S}_r$ has support on edges of round $r+1$ checks where the Paulis on each edge is the exponentiated Paulis of the current round's checks on the vertices in the loop.
We note that this approach can be thought of as a generalization of how logical operators are defined for the hyperbolic qubit Floquet code construction in~\cite{fahimniya2023hyperbolic}.
The difference here is how the non-identity Paulis are defined as we do not require the checks of the same color to correspond to the same Pauli observable, whereas all checks of the same color correspond to the same two-qubit Pauli observable in~\cite{fahimniya2023hyperbolic}. 
For example, a type-2 logical operator on $L_x$ in round $r=0\mod{3}$ has support on red edges and takes the form
\begin{equation}\label{eqn:type2_logical}
    \prod_{(v,u) \in L_x\cap E_r} [P_{g_v}^{\alpha_{(v,u)}}]_v [P_{g_u}^{\alpha_{(v,u)}}]_u \;,
\end{equation}
for some $\alpha_{(v,u)}\in[D]$.
A type-1 logical operator on $L_z$ of $\mathcal{S}_r$ is defined similarly, where it has support on edges of round $r$ checks where the Paulis on each edge is the exponentiated Paulis of the next round's checks on the vertices in the loop.
For example, a type-1 logical operator on $L_z$ in round $r=0\mod{3}$ has support on green edges and takes the form
\begin{equation}\label{eqn:type1_logical}
    \prod_{(v,u) \in L_z\cap E_g} [P_{r_v}^{\beta_{(v,u)}}]_v [P_{r_u}^{\beta_{(v,u)}}]_u \;,
\end{equation}
for some $\beta_{(v,u)}\in[D]$.

To show that these are indeed logical operators of round $r$ we need to show that they are operators in $\mathcal{N}(\mathcal{S}_r)\backslash\mathcal{S}_r$.
Suppose that checks $C_l$ on color $l$ edges is performed in round $r$.
First we note that since the logical operators are defined on a non-contractible loop, they cannot be generated by plaquette stabilizers of $\mathcal{S}_r$ as product of plaquettes with support on any edge $(v,u)\in E$ is identity due to condition 3 of Theorem~\ref{thm:general_qudit_floquet_code_conditions}.
A type-1 logical operator has support on edges of the color-$l$ checks but has the Paulis of the checks performed in the next round, hence it cannot be generated by $C_l$ as the Paulis of different colored checks on any vertex do not commute by condition 2 of Theorem~\ref{thm:general_qudit_floquet_code_conditions}.
A type-2 logical operator on the other hand, has support on edges of the checks performed in the next round, and hence also cannot be generated by $C_l$.
Thus we can conclude that the logical operators are not in $\mathcal{S}_r$.

To show that they are not in $\mathcal{N}(\mathcal{S}_r)$, the normalizer of $\mathcal{S}_r$, it is sufficient to show that they commute with the generators of $\mathcal{S}_r$, namely the plaquette operators and the checks of round $r$.
First we show that they commute with checks in round $r$.
For a type-1 logical operators, since they consist of Paulis of checks performed in round $r+1$ on edges of color $l$, it can be shown that they commute with color $l$ checks using condition 1 of Theorem~\ref{thm:general_qudit_floquet_code_conditions}.
A type-2 logical operators on the other hand, consist of Paulis of the color $l$ checks on edges of the next round checks, hence they must commute with the color $l$ checks.

Now we show that the logical operators commute with the plaquette operators.
Let us denote $l'$ as the color of next round's check and $l''$ as the color of previous round's check.
Consider a type-1 logical operator on loop $L$.
By condition 3 of Theorem~\ref{thm:general_qudit_floquet_code_conditions}, we can show that type-1 logical operators commute with the color-$l'$ plaquettes as these plaquettes have Paulis $P_{v_l}P_{v_{l''}}$ and $P_{u_l}P_{u_{l''}}$ on each edge $(v,u)\in L\cap E_l$.
To show that type-1 logical operators commute with color-$l''$ plaquettes, note that a color-$l''$ plaquette sharing qudits with a type-1 logical operator have Paulis $P_{v_l}P_{v_{l'}}$ and $P_{u_l}P_{u_{l'}}$ on each edge $(v,u)\in L\cap E_l$.
Since $c(P_{v_{l'}} , P_{v_l}) = -c(P_{u_{l'}} , P_{u_l})$ by condition 1, we can conclude that type-1 logical operators commute with color-$l''$ plaquettes.
Lastly, to show that type-1 logical operators commute with color-$l$ plaquettes, note that each of these plaquettes shares two qudits $v,v'$ with a type-1 logical operator and these two qudits are on two different color-$l$ edges, $(v,u),(v',u')\in L\cap E_l$.
Pauli operators of the plaquette stabilizer of these qudits are $P_{v_{l'}}P_{v_{l''}}$ and $P_{v_{l'}'}P_{v_{l''}'}$.
Thus we get commutation functions between Paulis of the plaquette and logical operators on qudits $v$ and $v'$ as
\begin{equation}
\begin{aligned}
    c(P_{v_{l'}}P_{v_{l''}} , P_{v_{l'}}^{\beta_{(v,u)}}) = \beta_{(v,u)} c(P_{v_{l''}} , P_{v_{l'}}) \\
    c(P_{v_{l'}'}P_{v_{l''}'} , P_{v_{l'}'}^{\beta_{(v',u')}}) = \beta_{(v',u')} c(P_{v_{l''}'} , P_{v_{l'}}) \;.
\end{aligned}
\end{equation}
Commutation between this plaquette stabilzier and the logical operator can then be guaranteed by setting 
\begin{equation}
    \beta_{(v',u')} = -\frac{\beta_{(v,u)} c(P_{v_{l''}} , P_{v_{l'}})}{c(P_{v_{l''}'} , P_{v_{l'}})} \;.
\end{equation}
Thus, we have shown that the type-1 logical operators commute with all plaquette stabilizers.

We can show that type-2 logical operators commute with the plaquette stabilizers similarly.
To show that it commutes with color-$l$ plaquettes and color-$l''$ plaquettes, we can use similar argument by using condition 3 and condition 1 of Theorem~\ref{thm:general_qudit_floquet_code_conditions}, respectively.
To show that it commutes with a color-$l'$ plaquette sharing qudits $v$ and $v'$ with the type-2 logical operator on edges $(v,u),(v',u')\in L\cap E_{l'}$, we can again look at the commutation function between Paulis of the plaquette and the logical operator at qudits $v$ and $v'$.
Then, setting 
\begin{equation}
    \alpha_{(v',u')} = -\frac{\alpha_{(v,u)} c(P_{v_{l''}} , P_{v_{l}})}{c(P_{v_{l''}'} , P_{v_{l}})}
\end{equation}
guarantees that the type-2 logical operators commute with this color-$l'$ plaquette.

Thus we have shown that both type-1 and type-2 logical operators in round $r$ defined on non-trivial loops on the lattice commute with all round $r$ checks and commute with all plaquette stabilizers while not in the stabilizer group $\mathcal{S}_r$ themselves, showing that they are in $\mathcal{N}(\mathcal{S}_r)\backslash\mathcal{S}$.
Lastly, we also note that for a type-1 and a type-2 logical operators on nontrivial loops $L_1$ (horizontal) and $L_2$ (vertical) which has the form of
\begin{equation}
\begin{gathered}
    Q_1 = \prod_{(v,u)\in L_1\cap E_{l'}} [P_{l_v}^{\alpha_{(v,u)}}]_v [P_{l_u}^{\alpha_{(v,u)}}]_u \\
    \text{and}\\
    Q_2 = \prod_{(v,u)\in L_2\cap E_{l}} [P_{l_v'}^{\alpha_{(v,u)}}]_v [P_{l_u'}^{\alpha_{(v,u)}}]_u
\end{gathered}
\end{equation}
we can identify them as $\Bar{X}_j$ and $\Bar{Z}_j$ logical operators.
This is done by identifying the qudit $w$ which is in both $L_1$ and $L_2$, then raise $Q_1$ to some power $a\in[D]$ such that their commutation function satisfy $c(Q_1^a,Q_2)=1$ so that we can define $\Bar{X}_j:=Q_1^a$ and $\Bar{Z}_j:=Q_2$.

\section{Error syndromes and decoding of qudit Floquet code}\label{sec:qudit_code_error_syndrome_decoding}

\begin{figure}
    \centering
    \includegraphics[width=1\columnwidth]{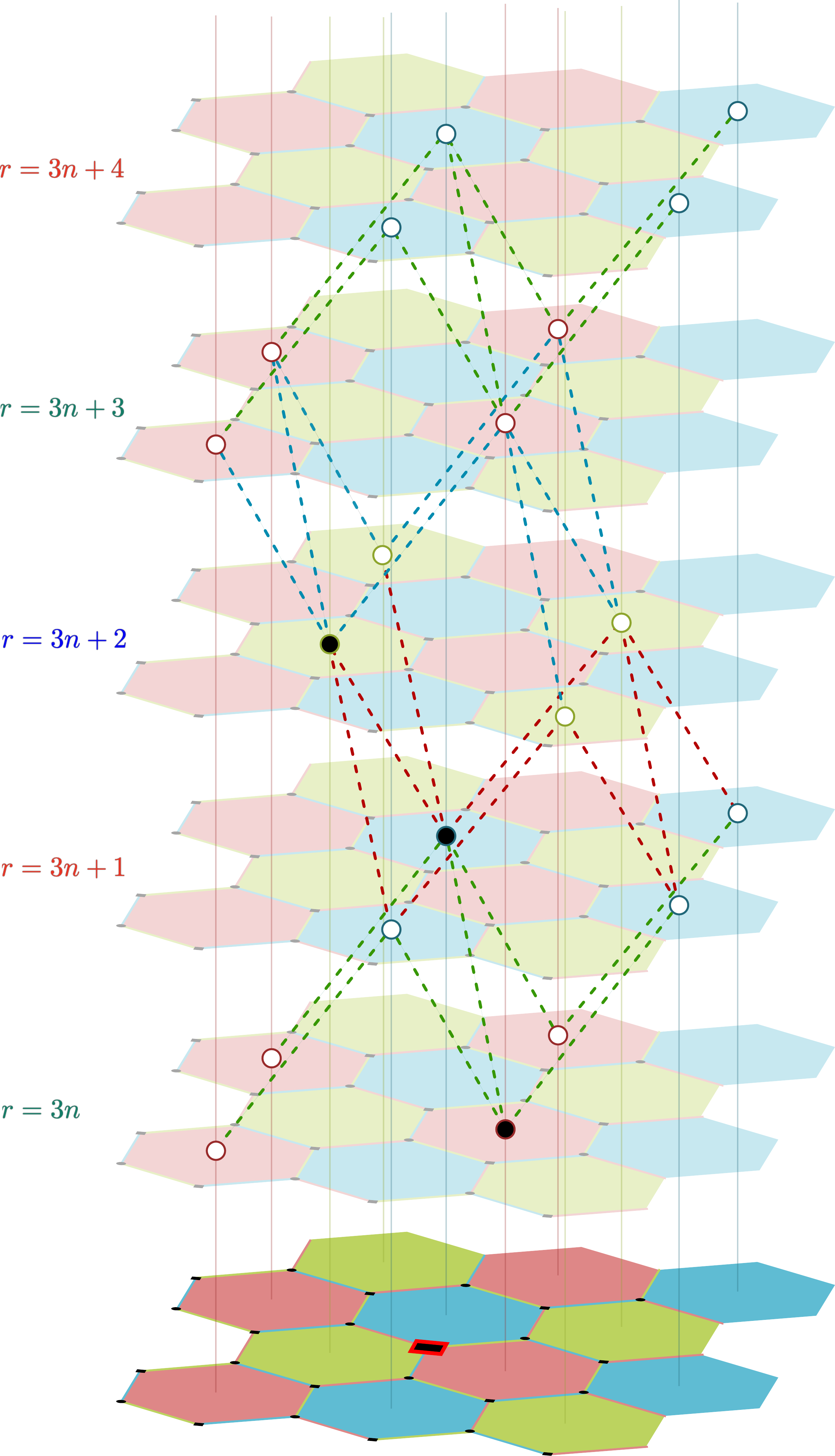}
    \caption{Space-time syndrome lattice over $5$ rounds.
    An error occuring on a qudit on the square vertex with red outline before round $r=3n$ (illustrated by square vertex with red outline) flips at least two out of three checks with support on that qudit in the next three rounds.
    The measurement outcomes of these three checks are used to infer the eigenvalue of the three plaquettes with support on the erroneous qudit.
    These potential error syndromes on the three plaquettes are represented by three black dots on the space-time lattice.
    Details on how error syndromes for the plaquette stabilizers are inferred is given in Section~\ref{sec:qudit_code_error_syndrome_decoding}.}
    \label{fig:syndrome_lattice}
\end{figure}

Here we illustrate how a single qudit error is detected and corrected by the qudit Floquet code.
A single qudit Pauli error $E\in\mathcal{P}_D$ on qudit on vertex $v\in V$ is detected by check measurement $[P_{v_l}]_v [P_{u_l}]_u$ when $E$ does not commute with $P_{v_l}$.
Suppose a single qudit error $E$ occurs on qudit $v$ right before round $r=0\mod{3}$ and no more errors over the next three rounds, then error $E$ is detected when
\begin{equation}
\begin{gathered}
    c(P_{v_g},E)\neq 0 \;,\mathrm{or}\\
    c(P_{v_r},E)\neq 0 \;,\mathrm{or}\\
    c(P_{v_b},E)\neq 0 \;.    
\end{gathered}
\end{equation}
Note that the Paulis $P_{v_g},P_{v_r},P_{v_b}$ of the green, red, and blue checks on qudit $v$ are distinct since they are pairwise non-commuting so that when an error $E$ occurs on a qudit it must flip at least two out of three checks with qudit $v$ on their support, i.e. gives a different measurement outcome compared to when it was last measured.
Each of the detection by a check will result in a change of the inferred eigenvalue of one or more plaquette stabilizers.

An error syndrome $s$ obtained in round $r$ corresponding to a plaquette stabilizer is represented by a vertex at the center of a plaquette which syndrome is inferred in that round.
Syndromes for the red plaquettes is obtained in round $r=0\mod{3}$, while syndrome for blue plaquettes are obtained in round $r=1\mod{3}$ and for green plaquettes, in round $r=2\mod{3}$.
These syndromes is used to construct a \textit{space-time syndrome lattice} where a syndrome $s$ is connected by three syndrome lattice edges to three syndromes of neighboring plaquettes inferred in the next round $r+1$ and to three other syndromes of neighboring plaquettes inferred in the previous round $r-1$.

In the illustration of Fig.~\ref{fig:syndrome_lattice}, a Pauli error $E$ occurs on a qudit $v$ illustrated by a square with red outline in between round $3n-1$ and $3n$.
In rounds $3n$, $3n+1$, and $3n+2$ green, red, and blue checks are measured, respectively.
In round $3n$, the eigenvalue of the red checks are inferred from measurement outcomes of the green checks in this round and the outcomes of the blue checks in the previous round.
If $E$ does not commute with the Pauli $P_{v_g}$ of the green check on vertex $v$, then the outcome of this check is not equal to the outcome of that same green check in round $3n-3$.
Thus if there is no other errors before round $3n$ then we can infer the eigenvalue of the red plaquette with support on $v$ which gives us an error syndrome $s_r\in[D]$ for that red plaquette stabilizer.
In round $3n+1$, eigenvalues of the blue plaquettes are inferred from red checks in this round and the green checks in the previous round.
A syndrome $s_b\in[D]$ for the blue plaquette with support on $v$ can be obtained similarly to the syndrome for red plaquettes in the last round.
The difference here being that when $c(E,P_{v_g}) = -c(E,P_{v_r})$, we can infer that no error occurs on that blue plaquette stabilizer since its Pauli operator on qudit $v$ is $P_{v_g}P_{v_r}$, which indicates that error $E$ commutes with this blue plaquette stabilizer.
Lastly in round $3n+2$, eigenvalues of the green plaquettes are inferred from the blue checks measured in this round and the red checks in last round.
Error syndrome for the green plaquette with support on $v$, if any, can be inferred similarly to the previous two rounds.

One can further consider a simplified error channel consists of an $X$-type error channel and a $Z$-type error channel affecting each qudit independently in each round (after check measurement) is one such channel that has been studied for qudit codes (see e.g.~\cite{anwar2014fast,watson2015fast}).
The $X$-type channel applies $X^i$ error to the codestate with probability $\frac{p}{d-1}$ and no error with probability $1-p$, whereas the $Z$-type channel applies $Z^j$ error with probability $\frac{p}{d-1}$ and no error with probability $1-p$, independent of the $X$-type channel (for $i,j\in\{1,\dots,d-1\}$).
The combination of both $X$-type and $Z$-type error channels can be expressed as an error channel $\mathcal{E}$ on a qudit density operator $\rho$ as
\begin{equation}
\begin{aligned}
    &\mathcal{E}(\rho) = (1-p)^2 \rho + \Big(\frac{p}{d-1}\Big)^2 \sum_{i,j=1}^{d-1} X^iZ^j \rho (Z^\dag)^j(X^\dag)^i \\
    &\; +\frac{(1-p)p}{d-1}\sum_{i'=1}^{d-1} X^{i'}\rho(X^\dag)^{i'} +\frac{(1-p)p}{d-1}\sum_{j'=1}^{d-1} Z^{j'}\rho(Z^\dag)^{j'} \,.
\end{aligned}
\end{equation}
Kraus operators of the simplified error channel $\mathcal{E}$ corresponds to elements of the (phaseless) qudit Pauli group $\mathcal{P}_D$.
Therefore, each Kraus operator correspond to a pattern in the syndrome lattice which can be used in a decoding scheme to correct errors.
Various decoders suitable for qudit codes such as the renormalization group (RG) decoders based on befief propagation~\cite{anwar2014fast,duclos2013kitaev,hutter2015improved,watson2015fast} or qudit Trellis decoder~\cite{sabo2024trellis} can be applied to the space-time syndrome lattice to decode multiple errors.

\section{$(\vcirc,\vsq)$-Qudit Floquet codes}
\label{sec:cirsq_qudit_floquet_code}

\begin{figure*}
    \centering
    \includegraphics[width=0.85\linewidth]{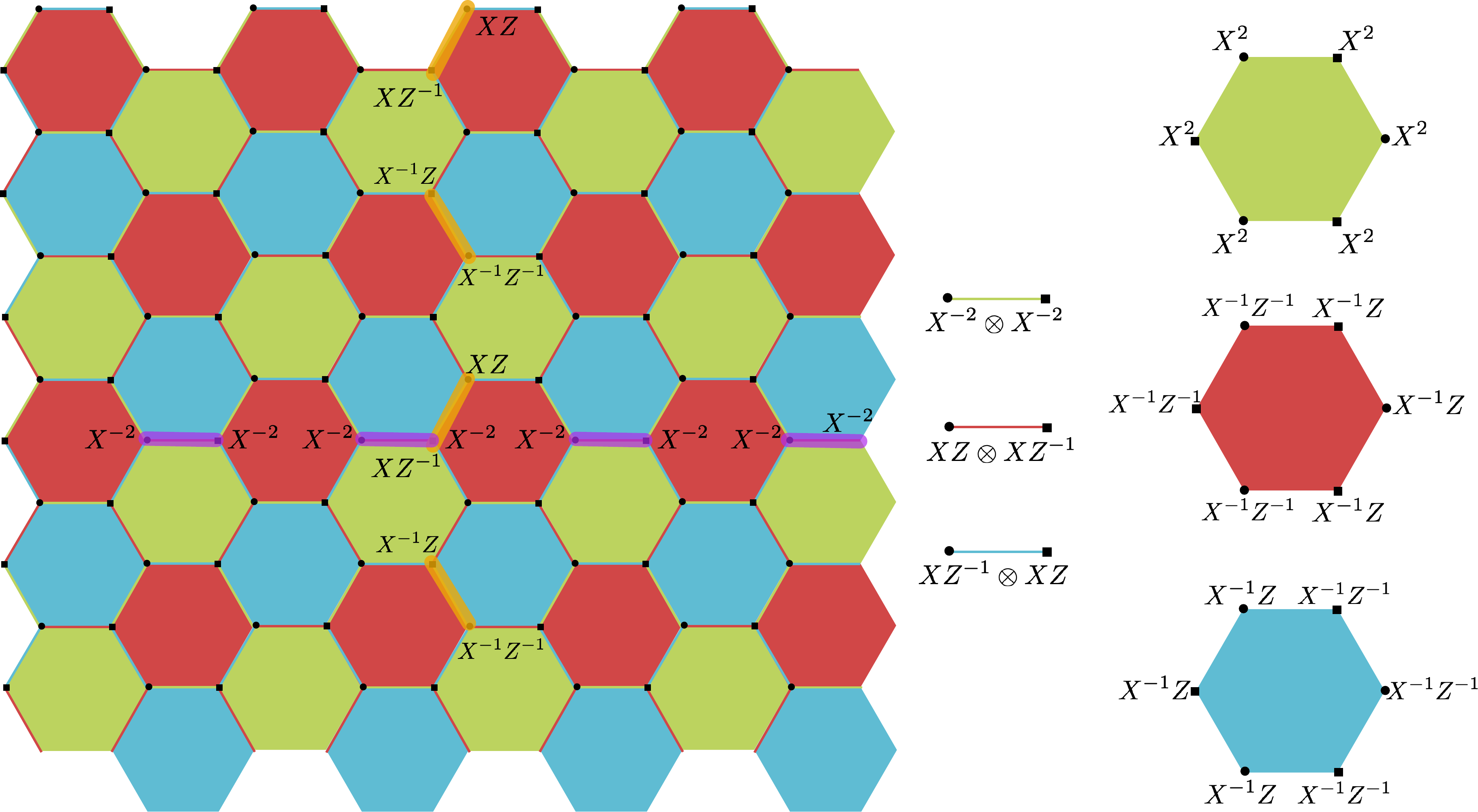}
    \caption{$(\vcirc,\vsq)$ Qudit Floquet code illustration.
    The left hand side figure is an illustration of a hexagonal lattice with periodic boundary condition for a 96 qudits Floquet honeycomb code.
    Three colors on the edges of the lattice represent different flavors of check measurements.
    The Pauli observable of the green, red, and blue checks are specified in the middle of the figure.
    Three colors on the hexagonal plaquettes represent three different flavors of plaquette operators of the code where Pauli operators on each $\vcirc$ and $\vsq$ for each color is given in the right hand side of the figure.
    Edges highlighted in orange and purple illustrate the type-1 and type-2 logical operators, respectively, in round $r=0\mod{3}$.
    As can be seen in the figure and verified for other rounds, the minimum weight of the logical operators in each round for the $96$ qudits Floquet code is $8$.
    Hence we can conclude that this code has a distance $d=8$ and an $\llbracket n,k,d \rrbracket$ code parameter of $\llbracket 96,2,8\rrbracket$.
    }
    \label{fig:qudit_honeycomb_lattice2}
\end{figure*}

Here we discuss a simple subfamily of the family of qudit Floquet codes defined in Section~\ref{sec:qudit_floquet_code} where checks of the same color correspond to the same weight-two measurement.
To define the $(\vcirc,\vsq)$-qudit Floquet codes, we first add an additional feature to the construction based on Theorem~\ref{thm:general_qudit_floquet_code_conditions} by assigning either a circle $\vcirc$ or a square $\vsq$ label to the vertices on a $\{p,3\}$-lattice with genus $g$ (again, for $p=6$ for $g=1$ and even $p\geq8$ for $g\geq2$) such that for any edge $(v,u)\in E$, exactly one of the vertices is of the type $\vcirc$ and the other is of the type $\vsq$.
Hence any $p$-gon has $p/2$ $\vcirc$ vertices and $p/2$ $\vsq$ vertices alternating on the vertices at its boundary.
An illustration for a $(\vcirc,\vsq)$-qudit Floquet codes on the honeycomb lattice, i.e. a $\{6,3\}$-tesselation on a genus $g=1$ surface is given in Fig.~\ref{fig:qudit_honeycomb_lattice2}.

We denote a Pauli $P_{a,b}$ assigned to a $\vcirc$-type vertex as $P_{a,b}^\vcirc$ and Pauli $P_{a'b'}$ assigned to a $\vsq$-type vertex as $P_{a',b'}^\vsq$.
A check is a measurement of tensor product of a $\vcirc$-type Pauli and a $\vsq$-type Pauli observable assigned to two vertices on an edge.
This gives us green, red, and blue checks of the form
\begin{equation}\label{eqn:checks}
\begin{gathered}
    [P_{g_1,g_2}^\vcirc]_v [P_{g_3,g_4}^\vsq]_u \,,\; (v,u)\in E_g \\
    [P_{r_1,r_2}^\vcirc]_v [P_{r_3,r_4}^\vsq]_u \,,\; (v,u)\in E_r \\
    [P_{b_1,b_2}^\vcirc]_v [P_{b_3,b_4}^\vsq]_u \,,\; (v,u)\in E_b \;,
\end{gathered}
\end{equation}
respectively for $g_1,g_2,g_3,g_4,r_1,r_2,r_3,r_4,b_1,b_2,b_3,b_4 \in [D]$.
For the rest of this section, when we talk about a check $[P_{l_1,l_2}]_v [P_{a_3,l_4}]_u$ we adopt the convention that check $P_{l_1,l_2}$ is a $\vcirc$-type check and $P_{l_3,l_4}$ is a $\vsq$-type check, i.e. $P_{l_1,l_2} := P_{l_1,l_2}^\vcirc$ and $P_{l_3,l_4} := P_{l_3,l_4}^\vsq$, when it is clear from the context.
We further require that these checks satisfy the following conditions:
\begin{enumerate}
    \item The two-qudit Paulis $P_{g_1,g_2} \otimes P_{g_3,g_4}$, $P_{r_1,r_2} \otimes P_{r_3,r_4}$, and $P_{b_1,b_2} \otimes P_{b_3,b_4}$ mutually commute.
    
    \item Paulis of checks with different color on the $\vcirc$ qudit and the $\vsq$ qudit do not commute, i.e. $c(P_{l_1,l_2}^\vcirc,P_{l_1',l_2'}^\vcirc) \neq 0$ and $c(P_{l_3,l_4}^\vsq,P_{l_3',l_4'}^\vsq) \neq 0$ for $l\neq l'$.
    
    \item The total number of $X$'s and $Z$'s in the $\vcirc$ Paulis and the total number of $X$'s and $Z$'s in the $\vsq$ Paulis over all the checks are both $0\mod{D}$, i.e. $g_j+r_j+b_j=0\mod{D}$ for $j\in\{1,2,3,4\}$.
\end{enumerate}

It can be verified that the checks of $(\vcirc,\vsq)$ qudit Floquet code satisfy the conditions of Theorem~\ref{thm:general_qudit_floquet_code_conditions}.
Since two-qudit Paulis $P_{g_1,g_2} \otimes P_{g_3,g_4}$, $P_{r_1,r_2} \otimes P_{r_3,r_4}$, and $P_{b_1,b_2} \otimes P_{b_3,b_4}$ mutually commute, they must satisfy condition 1 of Theorem~\ref{thm:general_qudit_floquet_code_conditions}.
The fact that Paulis of checks with different color on the $\vcirc$ qudit and the $\vsq$ qudit do not commute satisfy condition 2.
Lastly since the total number of both the $X$'s and the $Z$'s in both the $\vcirc$ and $\vsq$ Paulis of the checks are equal to $0\mod{D}$, they satisfy condition 3.
Thus the $(\vcirc,\vsq)$ qudit Floquet code has the same ISG dynamics, each of which encodes $2g$ logical qudits.

Now we provide an explicit construction of the $(\vcirc,\vsq)$ qudit Floquet code for arbitrary prime dimension $D\geq3$ with checks of the form
\begin{equation}\label{eqn:explicit_checks}
\begin{gathered}
    P_{g_1,g_2} \otimes P_{g_3,g_4} = X^{-2} \otimes X^{-2} \\
    P_{r_1,r_2} \otimes P_{r_3,r_4} = XZ \otimes XZ^{-1} \\
    P_{b_1,b_2} \otimes P_{b_3,b_4} = XZ^{-1} \otimes XZ \;.
\end{gathered}
\end{equation}
It can be verified that these checks satisfy the three requirements stated above for a $(\vcirc,\vsq)$ qudit Floquet code.
By using the qudit stabilizer update rules of Theorem~\ref{thm:qudit_stabilizer_update}, one can also obtain the green, red, and blue plaquette stabilizers with the following Paulis on its $\vcirc$ and $\vsq$ qudits:
\begin{equation}\label{eqn:explicit_plaquette_oper}
    \begin{array}{c|c|c}
        & \vcirc & \vsq \\
        \hline
        \mathrm{green} & X^2 & X^2 \\
        \mathrm{red} & X^{-1}Z^{-1} & X^{-1}Z \\
        \mathrm{blue} & X^{-1}Z & X^{-1}Z^{-1} \\ 
    \end{array} \;.
\end{equation}
A step-by-step update in the first 5 rounds (the initialization rounds) to obtain these plaquette stabilizers can be done similar to the proof outline of Theorem~\ref{thm:general_qudit_floquet_code_conditions} in Section~\ref{sec:ISG} (or in Appendix~\ref{app:general_qudit_floquet_code_conditions} for a detailed full proof).

\section{Discussion}

In this work we provide a construction of a large family of qudit Floquet codes which can be done on a torus and a large family of three-colorable hyperbolic lattices.
Our construction extends the existing proposals of dynamical quantum error-correcting codes~\cite{hastings2021dynamically,davydova2023floquet,fahimniya2023hyperbolic,gidney2021fault,haah2022boundaries,vuillot2021planar,paetznick2023performance,gidney2022benchmarking,higgott2023constructions,aasen2023measurement,zhang2023x,dua2024engineering,berthusen2023partial,fu2024error,mcewen2023relaxing,kesselring2024anyon,bombin2023unifying,townsend2023floquetifying,davydova2023quantum} which mostly focuses only on qubit systems, to qudit systems, as well as provide a construction of a family qudit ``Floquet'' codes where the code's measurement sequence is periodic which includes the qudit Floquet code construction on a torus proposed in~\cite{ellison2023floquet}.
In Theorem~\ref{thm:general_qudit_floquet_code_conditions} we provide a set of conditions on weight-2 Pauli measurements in the period-3 measurement schedule that we show to give a qudit Floquet codes in which the number of encoded logical qudits grows linearly with the genus (i.e. the number of ``holes'') of the lattice and an encoding rate that converges to $\frac{1}{2}$ asymptotically with the number of physical qudits and the size of faces on the lattice.
Additionally in Theorem~\ref{thm:qudit_stabilizer_update}, we show a set of rules to update a qudit stabilizer code after a Pauli measurement is performed on the code, analogous to qubit stabilizer update rules first proposed in~\cite{gottesman1998heisenberg} and used for stabilizer update of qubit dynamical codes (e.g. in~\cite{hastings2021dynamically,fu2024error}).

To further emphasize the utility of the qudit Floquet codes, one could also show a threshold behavior of the code by analytical or numerical means, e.g. for the simple qudit Pauli error channel.
While existing results on qubit Floquet codes such as~\cite{gidney2021fault,davydova2023floquet} uses maximum weight perfect matching (MWPM) decoder on the syndrome lattice of a qubit Floquet code to show a threshold behavior, this decoder is unsuitable for qudit codes due to the non-binary nature of the error syndromes.
For the qudit Floquet code, it would be interesting to adopt a decoder used for static qudit codes such as the renormalization group (RG) belief propagation decoders~\cite{duclos2013kitaev,anwar2014fast,watson2015fast,hutter2015improved} or the recently proposed qudit Trellis decoding~\cite{sabo2024trellis} to analyze any threshold behavior for the qudit Floquet code.
In doing this, one needs to generalize these qudit decoders to the qudit Floquet code error syndrome lattice discussed in Section~\ref{sec:qudit_code_error_syndrome_decoding}.

In a future work, one may also consider a ``planar'' qudit Floquet code on a lattice with boundaries as it has been done for qubit Floquet code in~\cite{vuillot2021planar,haah2022boundaries}, one may find interesting dynamics of the code from different measurement schedules due the presence of boundaries.
A curious phenomenon is shown in~\cite{vuillot2021planar} where the planar qubit Floquet code distance is constant with respect to the number of physical qubits, implying that the code does \textit{not} possess a fault-tolerant property.
This however, can be overcame through careful choice of measurement schedule and considerations of the boundaries~\cite{haah2022boundaries}.
It is interesting to see how different measurement schedules give rise to different planar qudit Floquet codes and how one should construct a measurement schedule such that the qudit ISGs encode logical qudits that grows with the number of physical qudits.

\section*{Acknowledgements}

AT is supported by CQT PhD Scholarship and Google PhD Fellowship program. KB acknowledges the support from A*STAR C230917003. This work is supported by the Singapore Ministry of Education Tier 1 Grants RG77/22 and RT4/23, the Agency for Science, Technology and Research (A*STAR) under its QEP2.0 programme (NRF2021-QEP2-02-P06), the CQT Bridging Grant, and FQxI under grant no ~FQXi-RFP-IPW-1903 (``Are quantum agents more energetically efficient at making predictions?") from the Foundational Questions Institute and Fetzer Franklin Fund (a donor-advised fund of Silicon Valley Community Foundation). Any opinions, findings and conclusions or recommendations expressed in this material are those of the author(s) and do not reflect the views of National Research Foundation or the Ministry of Education, Singapore.

\bibliographystyle{unsrt}
\bibliography{references}

\begin{appendix}

\onecolumngrid

\section{Proof of Theorem~\ref{thm:qudit_stabilizer_update}}\label{app:qudit_stabilizer_update_proof}

For case 1, the stabilizer code $\mathcal{C}(\mathcal{S})$ clearly left unchanged (i.e. $\mathcal{C}(\mathcal{S})=\mathcal{C}(\mathcal{S}')$) since all $|\psi\>\in\mathcal{C}(\mathcal{S})$ are stabilized by $\omega^a P$.
Now to show that the outcome $o$ is deterministic, note that there can only be one of $\omega^a P$ in $\mathcal{S}$ for all $a\in[D]$.
Since if $|\psi\>$ is stabilized by both $\omega^aP$ and $\omega^bP$ for $a\neq b$ (recall that all (in-)equalities and arithmetics are in $\mod{D}$), then $P|\psi\> = \omega^{-a}|\psi\>$ and $P|\psi\> = \omega^{-b}|\psi\>$ giving a contradiction $a=b$.
Thus if $\omega^a P\in\mathcal{S}$ then measurement of observable $\omega^a P$ on some state $|\psi\>\in\mathcal{C}(\mathcal{S})$ gives an outcome $o=a$ deterministically.

For case 2 and 3, we need to show 
that for any outcome $o$ upon measurement of $P$, state $|\psi'\>$ is in $\mathcal{C}(\mathcal{S}')$ if and only if it is a post-measurement state of some $|\psi\>\in\mathcal{C}(\mathcal{S})$.
In case 2, since all $|\psi'\>\in\mathcal{C}(\mathcal{S}')$ is a $+1$ eigenstate of all $g_1,\dots,g_m$, then it must be in $\mathcal{C}(\mathcal{S}')$.
Because it is also a $+1$ eigenstate of $\omega^{-o}P$, i.e. $\omega^{-o}P|\psi'\>=|\psi'\>$, thus $P|\psi'\>=\omega^o|\psi'\>$.
For projector $\Pi_o^{(P)}$ onto the $\omega^o$ eigenspace of $P$ (see eqn.~\eqref{eqn:qudit_pauli_eigenspace_projector}), we can see that $|\psi'\>$  is a post-measurement state of $|\psi'\>\in\mathcal{C}(\mathcal{S})$ from $\Pi_o^{(P)}|\psi'\>=|\psi'\>$.
Now consider a $|\psi\>\in\mathcal{C}(\mathcal{S})$.
Upon obtaining outcome $o$ from measurement of $P$, state $|\psi\>$ is projected onto the $\omega^o$ eigenspace of $P$ by projector $\Pi_o^{(P)}$, resulting in post-measurement state $|\psi'\>=\Pi_o^{(P)}|\psi\>$.
Since $P|\psi'\> = \omega^o|\psi'\>$, then $|\psi'\>$ is a $+1$ eigenstate of $\omega^{-o}P$.
As generators $g_1,\dots,g_m$ commute with $P$ and $g_j|\psi\>=|\psi\>$, therefore $|\psi'\>$ is also a $+1$ eigenstate of $g_j$ since $g_j|\psi'\> = g_j\Pi_o^{(P)}|\psi\> = \Pi_o^{(P)}g_j|\psi\> = \Pi_o^{(P)}|\psi\>=|\psi'\>$.
Thus, we complete the proof of case 2.

Now in case 3 first note that $g_1$ and $W'$ generates stabilizers in $w$ ,thus along with the generators $g_{l+1},\dots,g_m$ that commute with $P$ they generate the same stabilizer group $\mathcal{S}$.
Moreover the only element of this new set of generators $G'$ containing $g_1$, $W'$, and $g_{l+1},\dots,g_m$, its only element that does not commute with $P$ is $g_1$.
Also, all Paulis in $W'$ commute with $P$ (and with each other since $g_1,\dots,g_l$ mutually commute) and with $g_{l+1},\dots,g_m$.
First note that since all $|\psi\>\in\mathcal{C}(\mathcal{S})$ must be a $+1$ eigenstate of $g_1$, the probability of outcome $o$ from measuring $P$ is $\frac{1}{D}$ for any outcome $o$.
This is due to
\begin{equation}
\begin{aligned}
    \<\psi|P|\psi\> = \<\psi|g_1 P|\psi\> = \omega^{b_1} \<\psi|Pg_1|\psi\> = \omega^{b_1} \<\psi|P|\psi\> \;,
\end{aligned}
\end{equation}    
which implies that the expectation of measurement $P$ on state $|\psi\>$ is $\<\psi|P|\psi\> = 0$ (recall that $P = \sum_o \omega^o \Pi_o^{(P)}$ from eqn.~\eqref{eqn:qudit_pauli_eigenspace_projector}).
Thus, we know that for any outcome $o$, each $|\psi\>\in\mathcal{C}(\mathcal{S})$ is projected to a post-measurement state $|\psi_o'\>$ by projector $\Pi_o^{(P)}$ with non-zero probability.
Explicitly we have $|\psi_o'\> = \Pi_o^{(P)}|\psi\>$ (up to normalization), which implies that it is a $+1$ eigenstate of generators in $W'$ and $g_{l+1},\dots,g_m$ which commute with $P$ because $g|\psi_o'\> = g\Pi_o^{(P)}|\psi\> = \Pi_o^{(P)}g|\psi\> = \Pi_o^{(P)}|\psi\> = |\psi_o'\>$ for any $g$ in $W'$ or $g_{l+1},\dots,g_m$.
Since $|\psi_o'\>$ is the $\omega^o$ eigenstate of $P$ and a $+1$ eigenstate of generators in $W'$ and $g_{l+1},\dots,g_m$, it must be in $\mathcal{C}(\mathcal{S}')$.

Now we show that for any $|\psi'\>\in\mathcal{C}(\mathcal{S}')$ there exists $|\psi\>\in\mathcal{C}(\mathcal{S})$ such that $|\psi'\> \propto \Pi_o^{(P)}|\psi\> \neq 0$, i.e. a non-zero probability of obtaining outcome $o$ from measurement of $P$ on state $|\psi\>$.
Assume that stabilizer code $\mathcal{C}(\mathcal{S})$ encodes $k$ logical qudits, then consider basis $\{|\Bar{x}\> : x\in[D]^k\}$ for codespace $\mathcal{C}(\mathcal{S})$ where state $|\Bar{x}\>$ is stabilized by a subgroup generated by logical operators $\omega^{-x_1}\Bar{Z}_1,\dots,\omega^{-x_k}\Bar{Z}_k$ of $\mathcal{S}$ and $g_1,\dots,g_m$.
The logical operators can be transformed as $\Bar{Z}_i \mapsto \Bar{Z}_ig_1^{s_i}$ so that $\Bar{Z}_ig_1^{r_i}$ commute with $P$.
The new logical operator $\Bar{Z}_ig_1^{r_i}$ is merely a different representation of logical operator $\Bar{Z}_i$ as it is still a normalizer of $\mathcal{S}$ (so $\Bar{Z}_ig_1^{r_i}|\Bar{x}\>=|\Bar{x}\>$). 
Moreover they are also logical operators of the updated stabilizer $\mathcal{S}'$ as each $\Bar{Z}_ig_1^{r_i}$ commutes with all elements of $\mathcal{S}'$.
Basis $\{|\Bar{x}'\> : x\in[D]^k\}$ of the new code $\mathcal{C}(\mathcal{S}')$ is made out of state $|\Bar{x}'\>$ stabilized by a subgroup generated by the logical operators $\omega^{-x_1}\Bar{Z}_1g_1^{r_1},\dots,\omega^{-x_k}\Bar{Z}_kg_1^{r_k}$ and $\mathcal{S}'$.
Therefore the updated stabilizer code $\mathcal{C}(\mathcal{S}')$ also encodes $k$ logical qudits with updated basis $|\Bar{x}'\> \propto \Pi_o^{(P)}|\Bar{x}\>$ as the probability of outcome $o$ from measuring $P$ on $|\Bar{x}\>$ is non-zero.

\section{Proof of Theorem~\ref{thm:general_qudit_floquet_code_conditions}}\label{app:general_qudit_floquet_code_conditions}

We start with stabilizer that contains only the identity $\mathcal{S}_{-1}=\{I^{\otimes n}\}$.
In round $r=0$ we measure the green checks $C_g$ and obtain round-0 ISG $\mathcal{S}_0=\<C_g\>$ by invoking rule 2 of Theorem~\ref{thm:qudit_stabilizer_update}.

In round $r=1$ we measure the red checks $C_r$.
Since the degree of each vertex is three where the three edges consists of a green, a red, and a blue edge, each red check does not commute with two green checks that it shares its two vertices with by condition~\eqref{eqn:checks_noncommute}.
By measuring the $\frac{p}{2}$ red checks surrounding each blue plaquette one by one in clock-wise direction as illustrated in top diagram of Fig.~\ref{fig:plaquette_update}, we invoke update rule 3 of Theorem~\ref{thm:qudit_stabilizer_update} $\frac{p}{2}-1$ times, and invoke rule 2 one time.
For a given blue plaquette, a red check is non-commuting with two green checks that it shares its two vertices with but commutes with the product of these two green checks.
Hence rule 3 removes the two green checks from the stabilizer generator and adds the product of these two checks.

By similar reasoning, after measuring the second red check, we remove one of the $\frac{p}{2}$ green checks, remove the product of the last two green checks, and adds the product between three green checks to the stabilizer generator.
Doing this for $\frac{p}{2}-1$ red checks then gives us a weight-$p$ operator with support on the blue plaquette obtained by taking the products of $\frac{p}{2}$ red checks surrounding it.
The last red check commutes with this blue plaquette by the second line of condition 1 of Theorem~\ref{thm:general_qudit_floquet_code_conditions} and hence we apply update rule 2 and just add it to the list of generators.
By doing similar update on each blue plaquette, we obatin the round-1 ISG $\mathcal{S}_1 = \<A_b',C_r\>$ where $A_b'$ is a set consisting of weight-$p$ operators on each blue plaquette obtained by taking the products of red checks surrounding it.
We call the weight-$p$ Paulis in $A_b'$ the \textit{unformed} blue plaquette stabilizers (or, simply ``blue plaquette'' if clear from the context), as they are going to be updated in the next round.

In round $r=2$ we measure the blue checks $C_b$.
Note that each blue check is sharing one of its vertices with a red check and a blue plaquette stabilizer and sharing its other vertex with a different pair of red check and blue plaquette (see bottom diagram in Fig.~\ref{fig:plaquette_update}.
By condition 2 in Theorem~\ref{thm:general_qudit_floquet_code_conditions}, this blue check does not commute with each of the four stabilizers.
How the stabilizer update is performed in this round can be analyzed by considering a green plaquette and its three neighboring blue plaquettes from $A_b'$, then using update rules in Theorem~\ref{thm:qudit_stabilizer_update} as we measure $\frac{p}{2}$ red checks surrounding that green plaquette one by one in a clock-wise direction.
The first Blue check measured does not commute with two unformed blue plaquette stabilizers as well as two red checks.
Thus they will invoke update rule 3 which gives us an updated blue plaquettes and a weight-4 Pauli.
The weight-4 Pauli (illustrated as green edges in the second figure of the bottom diagram of Fig.~\ref{fig:plaquette_update}) is simply the product of two red checks, but an updated blue plaquette stabilizer is a product between an unformed blue plaquette and a red check on its boundary.
Suppose that this blue check is on the $(v,u)$ edge of the graph, so it takes the form $M_{(v,u)}=[P_{v_b}]_v [P_{u_b}]_u$.
On the other hand, the Pauli operator of two red checks and the two (unformed) blue plaquettes on vertices $v$ and $u$ are $P_{v_r}$ and $P_{u_r}$ and $P_{v_g}$ and $P_{u_g}$, respectively.
The weight-4 Pauli commutes with the blue check since by the last line of condition 1 of Theorem~\ref{thm:general_qudit_floquet_code_conditions},
\begin{equation}
\begin{aligned}
    &c(P_{v_b}\otimes P_{u_b} , P_{v_r}\otimes P_{u_r}) \\
    &= c(P_{v_b} , P_{v_r}) + c(P_{u_b},P_{u_r}) = 0 \;.
\end{aligned}
\end{equation}
Also by the third condition stated in Theorem~\ref{thm:general_qudit_floquet_code_conditions}, we have
\begin{equation}
\begin{aligned}
    & P_{v_b}P_{v_r}P_{v_g} \otimes P_{u_b}P_{u_r}P_{u_g} = I\otimes I \\
    &\Rightarrow P_{v_r}P_{v_g} = \omega^a P_{v_b}^{-1} \;\textup{and}\; P_{u_r}P_{u_g} = \omega^{-a} P_{u_b}^{-1}
\end{aligned}
\end{equation}
for some $a\in[D]$.
Note that one of the updated blue plaquette has a Pauli $P_{v_r}P_{v_g}$ on vertex $v$ and the other has a Pauli $P_{u_r}P_{u_g}$ on vertex $u$, hence each of the updated blue plaquettes commute with the blue check.
These two updated blue plaquettes are illustrated as hexagons with dark blue edges in the second figure of the bottom diagram of Fig.~\ref{fig:plaquette_update}, where the top blue plaquette is the product between the top unformed blue plaquette (top light blue in step 1) and the top red check.

The update from measuring the next blue check we measure is similar.
This blue check does not commute with a red check and an unformed blue plaquette sharing one of its vertices with, and does not commute with the weight-4 operator (green edge in the figure) obtained from the last update.
By invoking rule 3, the red check and the weight-4 operator are removed and a weight-6 operator obtained by taking the product of three red checks is then added to the set of stabilizer generators.
Also by rule 3, the unformed blue plaquette is updated by taking its product with the red check as in the update for the previous blue check measurement.
Updates for the remaining blue checks are done similarly, except for the last one where by using condition 3 of Theorem~\ref{thm:general_qudit_floquet_code_conditions} one can show that it commutes with updated blue plaquettes.
Also by the last equality of condition 1 in Theorem~\ref{thm:general_qudit_floquet_code_conditions}, this last blue check commutes with the weight-$p$ stabilizer on the green plaquette obtained by taking the products of all of its surrounding red checks.
By repeating this for each green plaquettes and three blue plaquettes surrounding it, we obtain the set of unformed green plaquette stabilizers $A_g'$ and initialized blue plaquette stabilizers $A_b$ where each plaquette in $A_b$ is the product between an unformed blue plaquette in $A_b'$ with its surrounding three red checks.
Thus at the end of this round we get the round-2 ISG $\mathcal{S}_2=\<A_g',A_b,C_b\>$.

Now in round $r=3$ we measure the green checks $C_g$.
The update is similar to the update in round $r=2$ where we use update rules in Theorem~\ref{thm:qudit_stabilizer_update} as we successively measure the $\frac{p}{2}$ green checks surrounding each red plaquette in a clock-wise order.
Note that each green check does not commute with two unformed green plaquette operators and two blue checks.
However, it commutes with the blue plaquette that it is sharing two vertices with.
Consider a red check on edge $(v,u)\in E_r$, so that it takes the form of $M_{(v,u)} = [P_{v_r}]_v [P_{u_r}]_r$.
The blue plaquette with support on qudits $v$ and $u$ has Paulis $P_{v_r}P_{v_g}$ and $P_{u_r}P_{u_g}$ on those qudits, respectively.
By line 2 of condition 1 in Theorem~\ref{thm:general_qudit_floquet_code_conditions} we have
\begin{equation}\label{eqn:checks_plaquettes_commute}
\begin{aligned}
    &c(P_{v_r}\otimes P_{u_r} , P_{v_r}P_{v_g} \otimes P_{u_r}P_{u_g})\\ 
    &= c(P_{v_r} , P_{v_r}P_{v_g}) + c(P_{u_r} , P_{u_r}P_{u_g}) \\
    &= c(P_{v_r} , P_{v_g}) + c(P_{u_r} , P_{u_g}) = 0 \;,
\end{aligned}
\end{equation}
which confirms that the red checks $C_r$ commute with blue plaquettes $A_b$.
The first $\frac{p}{2}-1$ green check measurements will give us an unformed red plaquette operator which is a product of blue checks surrounding it.
Then for the last green check we can show that it commutes with this unformed red plaquette by the first line of condition 1 in Theorem~\ref{thm:general_qudit_floquet_code_conditions}.
The first $\frac{p}{2}-1$ green check measurements will also update the unformed the green plaquette operators that are neighbors of the red plaquette.
Each updated green plaquette is simply the product between the unformed green plaquette with the blue check which in the boundary of both the green and red plaquette.
Repeating this update for all red plaquettes gives us the unformed red plaquettes $A_r'$ and update the green plaquettes.
Hence by the end of this round we get the round-3 ISG $\mathcal{S}_3 = \<A_r',A_g,A_b,C_g\>$.

Round $r=4$ is the last initialization round of the qudit Floquet code, in which we measure the red checks.
Each red check commutes with the green and blue plaquettes as can be verified by using line two and three of condition 1 in Theorem~\ref{thm:general_qudit_floquet_code_conditions} using similar argument as eqn.~\ref{eqn:checks_plaquettes_commute}.
However, each red check does not commute with two unformed red plaquettes and two green checks.
Similar to update in round $r=3$, we measure the red checks surrounding each blue plaquette successively in the clock-wise direction which updates the unformed red plaquettes by multiplying them with the green checks in its surrounding.
This is done similarly to the update for unformed green plaquettes in round $r=3$ and the update for unformed blue plaquettes in round $r=2$, i.e. by using rule 3 and 2 of Theorem~\ref{thm:qudit_stabilizer_update}, then the fact that the updated red plaquettes commute with the red checks can be shown using condition 3 of Theorem~\ref{thm:general_qudit_floquet_code_conditions}.
Finally by denoting $A_r$ as the updated red plaquettes, we obtain the ISG of round 4 $\mathcal{S}_4 = \<A_r,A_g,A_b,C_g\>$.

For round $r>4$, all checks commute with all plaquettes $A_g,A_r,A_b$ as established in the update of round 2, 3, and 4 above.
Hence the plaquette operators $A_g,A_r,A_b$ remains in the set of generators for ISG $\mathcal{S}_r$ for all $r>4$, whereas only exactly one of the set of checks $C_g,C_r,C_b$ is in the set of generators in period of three.

For each ISG $\mathcal{S}_r$ after initialization, 
each ISG encodes $k=2g$ logical qudits, where $g$ is the genus of the tesselated manifold\footnote{Genus $g$ has a relation $g=\frac{n}{16}+1$ to the number of physical qudits $n$. Give $k=2g$ encoded logical qudits, this implies that the encoding rate of $\lim_{n\rightarrow\infty}\frac{k}{n} = \frac{1}{8}+\frac{2}{n} = \frac{1}{8}$ qudits for any $\{8,3\}$-qudit hyperbolic Floquet code, as argued in~\cite{fahimniya2023hyperbolic}).}.
To see that there are $k=2g$ encoded logical qudits for a qudit Floquet code with $\{p,3\}$ tesselation on genus $g$ manifold, we can identify the number of independent generators for an ISG $\mathcal{S}_r$ and count the number of plaquettes and check generators\footnote{A similar explicit argument is made in~\cite{albuquerque2009topological} for a generalization of Kitaev's Toric code~\cite{kitaev2003fault} to higher genus manifold with arbitrary tesselation. In their case however, the stabilizer generators are defined on the plaquettes of lattice and dual lattice, as opposed to the generators of the qudit Floquet code ISGs which are defined on plaquettes and checks (color-$l$ edges).}.
Let us denote the number of plaquettes by $n_p$ and the number of edges by $n_e$.
The set of plaquette stabilizers are not independent since the product of all the plaquette operators $A_g,A_r,A_b$ is the identity $I^{\otimes n}$.
This can be seen by taking an arbitrary edge $(v,u)$ and take the product between the Paulis of the plaquettes with support on qudits $v$ and $u$ (note that there are exactly three plaquettes of each color with support on each vertex):
\begin{equation}
\begin{aligned}
    &\big(P_{v_r}P_{v_b} \otimes P_{u_r}P_{u_b}\big) \big(P_{v_b}P_{v_g} \otimes P_{u_b}P_{u_g}\big) \big(P_{v_g}P_{v_r} \otimes P_{u_g}P_{u_r}\big) \\
    &= \big( P_{v_g} P_{v_r} P_{v_b} \otimes P_{u_g} P_{u_r} P_{u_b} \big)^2 \\
    &= I\otimes I
\end{aligned}
\end{equation}
where the first equality and the third equality are by condition 1 and 3 of Theorem~\ref{thm:general_qudit_floquet_code_conditions}, respectively.
Thus the number of independent plaquette stabilizer generators is $n_p-1$.
The number of independent checks in the generator set of ISG $\mathcal{S}_r$ after initialization where check $C_l$ is measured is $|E_l|-1$.
This is due to the product between the color-$l$ checks and color-$l$ plaquettes for each $l\in\{g,r,b\}$ is equal to $I^{\otimes n}$, as can be seen by
\begin{equation}
\begin{aligned}
    \Big(P_{v_r}P_{v_b} \otimes P_{u_r}P_{u_b}\Big) \, P_{v_g}\otimes P_{u_g} = I\otimes I ,\;\forall(v,u)\in E_g \\
    \Big(P_{v_b}P_{v_g} \otimes P_{u_b}P_{u_g}\Big) \, P_{v_r}\otimes P_{u_r} = I\otimes I ,\;\forall(v,u)\in E_r \\
    \Big(P_{v_g}P_{v_r} \otimes P_{u_g}P_{u_r}\Big) \, P_{v_b}\otimes P_{u_b} = I\otimes I ,\;\forall(v,u)\in E_b 
\end{aligned}
\end{equation}
again by condition 1 and 3 of Theorem~\ref{thm:general_qudit_floquet_code_conditions}.

The number of logical qudits $k$ can be obtained by $D^k = \frac{D^n}{D^{n_p-1 + |E_l|-1}}$, or equivalently $k = n- (n_p + |E_l|-2)$.
Since any two checks of color $l$ have support on two different pairs of qudits, therefore $|E_l|=\frac{n}{2}$.
For honeycomb lattice (i.e. $\{6,3\}$-tesselation of a torus), we have $n_p=\frac{n}{2}$ plaquettes and hence each ISG encodes $k = n- (n-2) = 2$ qudits.
For a $\{p,q\}$-tesselation on a genus-$g$ manifold such that $(p-2)(q-2)>4$ it is shown in~\cite{albuquerque2009topological} that the number of plaquettes must satisfy
\begin{equation}\label{eqn:num_plaquettes_equality}
    4(g-1) = n_p\Big(p-2-\frac{2p}{q}\Big) 
\end{equation}
and the number of edges is $n_e = n_p\frac{p}{2}$ and the number of vertices $n=n_p\frac{p}{q}$.
For a three-colorable $\{p,3\}$-tesselation, the number of edges with color $l\in\{g,r,b\}$ is $|E_l| = n_p\frac{p}{6}$ and $n=n_p\frac{p}{3}$.
Thus the number of encoded qudits is
\begin{equation}\label{eqn:relation_logical_qudits_plaquettes_hyeprbolic}
\begin{aligned}
    k &= n - (n_p+|E_l|-2) \\
    &= \frac{n_pp}{6} - n_p + 2 \;,
\end{aligned}
\end{equation}
then by substituting in $q=3$ to eqn.~\eqref{eqn:num_plaquettes_equality} and rearranging, we get
\begin{equation}
\begin{aligned}
    4(g-1) &= n_p\Big(p-2-\frac{2p}{3}\Big) \\
    2g-2 &= \frac{n_p}{2}\Big( \frac{p}{3}-2 \Big) \\
    2g &= \frac{n_pp}{6} - n_p + 2 \;.
\end{aligned}
\end{equation}
Hence we have shown that $k=2g$, which implies that the number of encoded logical qudits $k$ in a three-colorable $\{p,3\}$ lattice for even $p\geq8$ on genus-$g$ manifold is $2g$.

\end{appendix}

\end{document}